\providecommand{\U}[1]{\protect\rule{.1in}{.1in}}
\newtheorem{theorem}{Theorem}
\newtheorem{example}[theorem]{Example}
\newtheorem{lemma}[theorem]{Lemma}
\newtheorem{remark}[theorem]{Remark}
\newenvironment{proof}[1][Proof]{\noindent\textbf{#1.} }{\ \rule{0.5em}{0.5em}}
\begin{document}

\title{Loss of Memory and Convergence of Quantum Markov Processes }
\author{Keiji Matumoto\\National Institute of Informatics, \\Hitotsubashi 2-1-2, Chiyoda-ku, Tokyo 101-8430 Japan}
\maketitle

\section{Abstract}

In a quantum (inhomogeneous) Markov process $\rho_{1}:=\Gamma_{1}\left(
\rho\right)  $, $\,\rho_{2}:=\Gamma_{2}\left(  \rho_{1}\right)  $, $\cdots$,
where $\Gamma_{i}$ are CPTP maps and $\rho$ is the initial state, the the
state of the system is either oscillatory or convergent to a point or
convergent to an oscillatory orbit. Whichever the case it is, "information"
about the initial state is always monotone non-increasing \ and convergent.
This fact motivate us to define an equivalence class of families of quantum
states, which embodies the bundle of all "information quantities" about the
initial state. We show, for \textit{any} quantum inhomogeneous Markov process
over a finite dimensional Hilbert space, the trajectory in the space of the
all equivalence classes is "monotone decreasing" and convergent to a point,
relative to a reasonablly defined topology. Also, a characterization of weak
ergodicity in this picture is given.

\section{Introduction}

A classical (inhomogeneous) Markov process is defined by a sequence $\left\{
P_{i}\right\}  _{i=1}^{\infty}$ of transition probability matrices, and an
initial probability distribution $\boldsymbol{p}$,
\[
\boldsymbol{p}_{1}:=P_{1}\boldsymbol{p}\,,\,\,\boldsymbol{p}_{2}%
:=P_{2}\boldsymbol{p}_{1}\,,\cdots.
\]
A quantum version of (inhomogeneous) Markov process may be defined by a
sequence $\left\{  \Gamma_{i}\right\}  _{i=1}^{\infty}$ of completely positive
and trace preserving (CPTP) maps, and an initial density matrix $\rho$,%
\[
\rho_{1}:=\Gamma_{1}\left(  \rho\right)  ,\,\rho_{2}:=\Gamma_{2}\left(
\rho_{1}\right)  ,\cdots.
\]
(If the probability space is a discrete set, the former is just a special case
of the latter.) A classical or quantum Markov process may converge to a state,
may oscillate, or may asymptotically come close to an oscillating orbit,
depending on eigenvalues of $P$ or $\Gamma$.

Whichever the case it is, "information" about the initial state
($\boldsymbol{p}$ or $\rho$) is non-increasing function of time. This fact may
be described mathematically as follows. Let $\mathcal{E=}\left\{  \rho
_{\theta}\,;\theta\in\Theta\right\}  $ be a family of initial states,
$D\left(  \rho_{\theta_{1}},\rho_{\theta_{2}},\cdots\rho_{\theta_{k}}\right)
$ (an "information quantity") be a positive $k$- points function which is
non-increasing by CPTP maps. ( $D\left(  \rho_{\theta_{1}},\rho_{\theta_{2}%
}\right)  :=\left\Vert \rho_{\theta_{1}}-\rho_{\theta_{2}}\right\Vert _{1}$,
e.g.) \ Also, let
\begin{align*}
\mathcal{E}_{0}  &  :\mathcal{E}=\left\{  \rho_{\theta};\theta\in
\Theta\right\}  ,\\
\,\mathcal{E}_{1}  &  :=\left\{  \rho_{\theta,1};\theta\in\Theta\right\}
,\,\,\rho_{\theta,1}:=\Gamma_{1}\left(  \rho_{\theta}\right)  ,\,\\
\mathcal{E}_{2}  &  :=\left\{  \rho_{\theta,2};\theta\in\Theta\right\}
,\,\,\rho_{\theta,2}:=\Gamma_{2}\left(  \rho_{\theta,1}\right)  ,
\end{align*}
then%
\begin{align*}
D\left(  \rho_{\theta_{1}},\rho_{\theta_{2}},\cdots\rho_{\theta_{k}}\right)
&  \geq D\left(  \rho_{\theta_{1},1},\rho_{\theta_{2},1},\cdots\rho
_{\theta_{k},1}\right) \\
&  \geq D\left(  \rho_{\theta_{1},2},\rho_{\theta_{2},2},\cdots\rho
_{\theta_{k},2}\right) \\
&  \cdots\\
&  \geq0.
\end{align*}
\ \ \ Obviously, the sequence $\left\{  D\left(  \rho_{\theta_{1},i}%
,\cdots\rho_{\theta_{k},i}\right)  \right\}  _{i=1}^{\infty}$ converges, being
monotone decreasing and bounded from below.

So we ask the following question. Is there an object which embodies the
totality of information quantities, which is "monotone decreasing", and
"converges" to a point as time passes? In this paper, as such an object, we
propose an equivalence class of state families over a Hilbert space;
$\mathcal{E}\mathcal{=}\left\{  \rho_{\theta};\theta\in\Theta\right\}  $ is
equivalent to $\mathcal{F=}\left\{  \sigma_{\theta};\theta\in\Theta\right\}  $
\ if and only if
\begin{equation}
D\left(  \rho_{\theta_{1}},\rho_{\theta_{2}},\cdots\rho_{\theta_{k}}\right)
=D\left(  \sigma_{\theta_{1}},\sigma_{\theta_{2}},\cdots\sigma_{\theta_{k}%
}\right)  , \label{Drho=Dsigma}%
\end{equation}
holds for any CPTP monotone decreasing functional $D$. Also, we introduce
order structure in the space of these equivalence classes; $\left[
\mathcal{E}\right]  \succeq\left[  \mathcal{F}\right]  $ if and only if
\begin{equation}
D\left(  \rho_{\theta_{1}},\rho_{\theta_{2}},\cdots\rho_{\theta_{k}}\right)
\geq D\left(  \sigma_{\theta_{1}},\sigma_{\theta_{2}},\cdots\sigma_{\theta
_{k}}\right)  \label{Drho>Dsigma}%
\end{equation}
holds for any $k$-point functional $D$ which is monotone decreasing by CPTP maps.

Obviously, the sequence $\left\{  \left[  \mathcal{E}_{i}\right]  \right\}
_{i=0}^{\infty}$ is monotone decreasing
\[
\left[  \mathcal{E}\right]  =\left[  \mathcal{E}_{0}\right]  \succeq\left[
\mathcal{E}_{1}\right]  \succeq\cdots,
\]
and the value of each $D$ is convergent. But to make above rough statement
rigorous, we have to prove the existence of the family
\[
\mathcal{E}_{\infty}:=\left\{  \rho_{\theta,\infty};\theta\in\Theta\right\}
,
\]
such that
\begin{equation}
\lim_{i\rightarrow\infty}D\left(  \rho_{\theta_{1},i},\rho_{\theta_{2}%
,i},\cdots\rho_{\theta_{k},i}\right)  =D\left(  \rho_{\theta_{1},\infty}%
,\rho_{\theta_{2},\infty},\cdots\rho_{\theta_{k},\infty}\right)
\label{limD=D}%
\end{equation}
holds for any well-behaved functional $D$, and that
\[
\lim_{i\rightarrow\infty}\left[  \mathcal{E}_{i}\right]  =\left[
\mathcal{E}_{\infty}\right]
\]
holds with respect to a reasonably defined topology.

The line of arguments in this paper is more or less in reminiscent of
\cite{lindqvist:77}. However, there are some notable differences. First,
\cite{lindqvist:77} is dealing with classical Markov processes (over the
finite set), while we are dealing with its quantum counterpart. Second, in
\cite{lindqvist:77}, $\Theta$ is a finite set; an initial state is
concentrated at one of the site. Due to these two, \cite{lindqvist:77} can
utilize Blackwell measure \cite{Torgersen}, for which there is one-to-one
correspondence with an equivalence class of families of probability
distributions over \textit{all} measurable spaces. In quantum case, however,
the counterpart of Blackwell measure so far proposed is a state over a very
huge algebra \cite{GutaJencova}, and thus not handy to deal with. Hence, we
prefer to treat the equivalence classes directly, rather than using the
quantum version of Blackwell measure.

\section{Equivalent classes of finite dimensional state families}

\label{sec:equivalence-class}

Let $\mathcal{B}\left(  \mathcal{H}\right)  $ and $\mathcal{S}\left(
\mathcal{H}\right)  $ be the set of operators and density operators over
$\mathcal{H}$, respectively. Let $\mathcal{C}\left(  \mathcal{H}\right)  $
denote CPTP maps from $\mathcal{B}\left(  \mathcal{H}\right)  $ to itself. Let
$\mathcal{H}:=%
\mathbb{C}
^{d}$, and $\Theta$ be a set. Denote by $\mathcal{S}\left(  \mathcal{H}%
\right)  ^{\Theta}$ the set of all families of states in $\mathcal{H}$
parameterized by $\theta\in\Theta$.

Let Introduce preorder $\succeq$ to $\mathcal{S}\left(  \mathcal{H}\right)
^{\Theta}$ : \ Given $\mathcal{E}:=\left\{  \rho_{\theta};\theta\in
\Theta\right\}  $, $\mathcal{F}:=\left\{  \sigma_{\theta};\theta\in
\Theta\right\}  \in\mathcal{S}\left(  \mathcal{H}\right)  ^{\Theta}$, we write
$\mathcal{E\succeq F}$ if and only if
\begin{equation}
\Lambda\left(  \mathcal{E}\right)  =\mathcal{F}\,,\ \exists\Lambda
\in\mathcal{C}\left(  \mathcal{H}\right)  ,\label{randomization}%
\end{equation}
with
\[
\Lambda\left(  \mathcal{E}\right)  :=\left\{  \Lambda\left(  \rho_{\theta
}\right)  ;\theta\in\Theta\right\}  .
\]
(\ref{randomization}) holds if and only if (\ref{Drho>Dsigma}) holds for any
$k$-point CPTP monotone non-increasing functional $D$ with (\ref{D-D<f}) and
for any $k$ \cite{Matsumoto}. Thus, definition here is the same as the one
mentioned in the introduction.

Introduce equivalence relation $\equiv$ in $\mathcal{S}\left(  \mathcal{H}%
\right)  ^{\Theta}$as follows:
\begin{equation}
\mathcal{E}\equiv\mathcal{F}\Leftrightarrow\mathcal{E\succeq F}%
,\,\mathcal{F\succeq E\,}.\, \label{def-equiv}%
\end{equation}
We denote by $\mathbb{E}\left(  \Theta,\mathcal{H}\right)  $ the totality of
this equivalence classes. $\left[  \mathcal{E}\right]  $ denotes the
equivalence class to which $\mathcal{E}$ belongs.

Introduce pseudo metric $\Delta$ on $\mathcal{S}\left(  \mathcal{H}\right)
^{\Theta}$ as follows:%
\begin{align*}
\Delta\left(  \mathcal{E},\mathcal{F}\right)   &  :=\max\left\{  \delta\left(
\mathcal{E},\mathcal{F}\right)  ,\delta\left(  \mathcal{F},\mathcal{E}\right)
\right\}  ,\\
\delta\left(  \mathcal{E},\mathcal{F}\right)   &  :=\inf_{\Lambda_{2}%
\in\mathcal{C}\left(  \mathcal{H}\right)  }\sup_{\theta\in\Theta}\left\Vert
\Lambda\left(  \rho_{\theta}\right)  -\sigma_{\theta}\right\Vert _{1},
\end{align*}
where $\left\Vert A\right\Vert _{1}=\mathrm{tr}\,\sqrt{A^{\ast}A}$. Observe,
by (\ref{randomization}),
\begin{align}
\Delta\left(  \mathcal{E},\mathcal{F}\right)   &  =0\Leftrightarrow
\mathcal{E}\equiv\mathcal{F},\label{Delta=0-equiv}\\
\mathcal{E}  &  \equiv\mathcal{E}^{\prime},\mathcal{F}\equiv\mathcal{F}%
^{\prime}\Rightarrow\delta\left(  \mathcal{E},\mathcal{F}\right)
=\delta\left(  \mathcal{E}^{\prime},\mathcal{F}^{\prime}\right)  .
\label{equiv->delta=delta}%
\end{align}
Therefore, each of $\delta$ and $\Delta$ naturally defines a two point
functional in $\mathbb{E}\left(  \Theta,\mathcal{H}\right)  $, which is also
denoted by $\delta$ and $\Delta$:
\begin{align*}
\delta\left(  \left[  \mathcal{E}\right]  ,\left[  \mathcal{F}\right]
\right)   &  :=\delta\left(  \mathcal{E},\mathcal{F}\right)  ,\\
\Delta\left(  \left[  \mathcal{E}\right]  ,\left[  \mathcal{F}\right]
\right)   &  :=\Delta\left(  \mathcal{E},\mathcal{F}\right)  .
\end{align*}
The topology over $\mathbb{E}\left(  \Theta,\mathcal{H}\right)  $ indeed by
the metric $\Delta$ is called \textit{strong topology}.

By definition, we have%
\begin{align}
\delta\left(  \mathcal{E},\mathcal{F}\right)   &  \leq\delta\left(
\mathcal{E},\mathcal{E}^{\prime}\right)  +\delta\left(  \mathcal{E}^{\prime
},\mathcal{F}\right)  ,\label{delta<delta+delta}\\
\delta\left(  \mathcal{E},\mathcal{F}\right)   &  =0\Leftrightarrow
\,\mathcal{E}\succeq\mathcal{F}. \label{L(r)=s-2}%
\end{align}
and
\begin{align}
\Delta\left(  \mathcal{E},\mathcal{F}\right)   &  \leq\Delta\left(
\mathcal{E},\mathcal{E}^{\prime}\right)  +\Delta\left(  \mathcal{E}^{\prime
},\mathcal{F}\right)  ,\label{d<d+d}\\
\Delta\left(  \mathcal{E},\mathcal{F}\right)   &  \geq\Delta\left(
\Lambda\left(  \mathcal{E}\right)  ,\,\Lambda\left(  \mathcal{F}\right)
\right)  \,,\,\forall\Lambda\in\mathcal{C}\left(  \mathcal{H}\right)
\label{d>dL}%
\end{align}

Define projection from $\mathcal{S}\left(  \mathcal{H}\right)  ^{\Theta\text{
}}$ to $\mathbb{E}\left(  \Theta,\mathcal{H}\right)  $ such that
\[
P:\mathcal{E\rightarrow}\left[  \mathcal{E}\right]  .
\]

\begin{lemma}
\label{lem:s-compact}Suppose that $\Theta$ is a finite set. Then,
$\mathbb{E}\left(  \Theta,\mathcal{H}\right)  $ is compact with respect to
strong topology.
\end{lemma}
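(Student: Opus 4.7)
The plan is to realize $\mathbb{E}(\Theta,\mathcal{H})$ as the continuous image of a compact space under the projection $P$. Equip $\mathcal{S}(\mathcal{H})^{\Theta}$ with the metric
\[
d(\mathcal{E},\mathcal{F}):=\sup_{\theta\in\Theta}\|\rho_{\theta}-\sigma_{\theta}\|_{1}.
\]
Since $\Theta$ is finite and $\mathcal{S}(\mathcal{H})$ is a closed, bounded subset of the finite-dimensional space $\mathcal{B}(\mathcal{H})$, the space $\mathcal{S}(\mathcal{H})^{\Theta}$ is a finite product of compact metric spaces, hence itself compact in the metric $d$.

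Next I would verify that the projection $P:(\mathcal{S}(\mathcal{H})^{\Theta},d)\to(\mathbb{E}(\Theta,\mathcal{H}),\Delta)$ is $1$-Lipschitz, and in particular continuous. Taking $\Lambda=\mathrm{id}$ in the infimum defining $\delta$ yields
\[
\delta(\mathcal{E},\mathcal{F})\leq\sup_{\theta}\|\rho_{\theta}-\sigma_{\theta}\|_{1}=d(\mathcal{E},\mathcal{F}),
\]
and symmetrically for $\delta(\mathcal{F},\mathcal{E})$, so $\Delta(P(\mathcal{E}),P(\mathcal{F}))\leq d(\mathcal{E},\mathcal{F})$. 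By (\ref{Delta=0-equiv}), $\Delta$ descends to a genuine metric on $\mathbb{E}(\Theta,\mathcal{H})$, so the strong topology is Hausdorff and the continuous image argument applies.

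To finish, given any sequence $\{[\mathcal{E}_{n}]\}\subset\mathbb{E}(\Theta,\mathcal{H})$, choose representatives $\mathcal{E}_{n}\in\mathcal{S}(\mathcal{H})^{\Theta}$. By compactness of $(\mathcal{S}(\mathcal{H})^{\Theta},d)$, a subsequence $\mathcal{E}_{n_{k}}$ converges in $d$ to some $\mathcal{E}_{\infty}$, and then the Lipschitz estimate gives $\Delta([\mathcal{E}_{n_{k}}],[\mathcal{E}_{\infty}])\to 0$. Hence $\mathbb{E}(\Theta,\mathcal{H})$ is sequentially compact, equivalently compact since it is metrizable. Since $P$ is surjective by definition, no further work is required.

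I do not anticipate a serious obstacle: the only mildly delicate points are (i) checking that $\Delta$ is really a metric and not merely a pseudo-metric on equivalence classes, which is already recorded in (\ref{Delta=0-equiv}), and (ii) the elementary bound $\delta\leq d$ obtained by using the identity channel as a feasible $\Lambda$. The finiteness of $\Theta$ enters only in guaranteeing that the product $\mathcal{S}(\mathcal{H})^{\Theta}$ is compact under the sup trace-norm; for infinite $\Theta$ this step would fail and a more elaborate compactification would be needed.
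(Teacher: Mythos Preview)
Your proof is correct and follows essentially the same approach as the paper: both equip $\mathcal{S}(\mathcal{H})^{\Theta}$ with the sup trace-norm, note its compactness via finite-dimensionality, establish the inequality $\Delta\leq d$ (the paper's (\ref{Delta>Delta}), which you obtain cleanly by choosing $\Lambda=\mathrm{id}$), and conclude that $\mathbb{E}(\Theta,\mathcal{H})$ is compact as the continuous image of a compact space under $P$. Your explicit $1$-Lipschitz formulation and the sequential-compactness wrap-up are minor presentational variations, not a different route.
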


\begin{proof}
Define a norm
\[
\left\Vert \left\{  X_{\theta};\theta\in\Theta\right\}  \right\Vert _{1}%
:=\max_{\theta\in\Theta}\left\Vert X_{\theta}\right\Vert _{1}\,,
\]
where $\left\{  X_{\theta};\theta\in\Theta\right\}  \in\mathcal{B}\left(
\mathcal{H}\right)  ^{\Theta}$, and equip $\mathcal{B}\left(  \mathcal{H}%
\right)  ^{\Theta}$ with the topology defined by this norm. Then, since
$\mathcal{B}\left(  \mathcal{H}\right)  ^{\Theta}$ is finite dimensional
vector space, all the norm are topologically equivalent. Thus, $\mathcal{S}%
\left(  \mathcal{H}\right)  ^{\Theta}$ is compact with respect to the topology
defined above. Also, as is shown below, the projection $P$ from $\mathcal{S}%
\left(  \mathcal{H}\right)  ^{\Theta}$ onto $\mathbb{E}\left(  \Theta
,\mathcal{H}\right)  $ is continuous. Therefore, $\mathbb{E}\left(
\Theta,\mathcal{H}\right)  $ is compact.

Continuity of $P$ is proved as follows. Denote
\[
\mathcal{E}-\mathcal{F}:=\left\{  \rho_{\theta}-\sigma_{\theta};\theta
\in\Theta\right\}  ,
\]
where $\mathcal{E}:=\left\{  \rho_{\theta};\theta\in\Theta\right\}  $ and
$\mathcal{F}:=\left\{  \sigma_{\theta};\theta\in\Theta\right\}  $ .\ Observe
\begin{equation}
\left\Vert \mathcal{E}-\mathcal{F}\right\Vert _{1}\geq\Delta\left(  \left[
\mathcal{E}\right]  ,\left[  \mathcal{F}\right]  \right)  .\label{Delta>Delta}%
\end{equation}
Also observe, for any point $\left[  \mathcal{E}\right]  \in\mathbb{E}\left(
\Theta,\mathcal{H}\right)  $ in an open set $O\subset\mathbb{E}\left(
\Theta,\mathcal{H}\right)  $, there is $\varepsilon$ with \ \
\[
\left\{  \left[  \mathcal{F}\,\right]  ;\,\Delta\left(  \left[  \mathcal{E}%
\right]  ,\left[  \mathcal{F}\right]  \right)  <\varepsilon\right\}  \subset
O.
\]
Therefore, by (\ref{Delta>Delta}), 5%
\begin{align*}
P^{-1}\left(  O\right)   &  \supset P^{-1}\left(  \left\{  \left[
\mathcal{F}\,\right]  ;\,\Delta\left(  \left[  \mathcal{E}\right]  ,\left[
\mathcal{F}\right]  \right)  <\varepsilon\right\}  \right)  ,\\
&  =\left\{  \mathcal{F};\,\Delta\left(  \mathcal{E},\mathcal{F}\right)
<\varepsilon\mathcal{\,}\right\}  ,\\
&  \supset\left\{  \mathcal{F};\left\Vert \mathcal{E}-\mathcal{F}\right\Vert
_{1}<\varepsilon\mathcal{\,}\right\}  ,
\end{align*}
which means $P^{-1}\left(  O\right)  $ is open. Therefore, $P$ is continuous.
\end{proof}

\begin{remark}
It is may be worthwhile to mention that the partial order "$\mathcal{\succeq}$
" has a good operational meaning. That is,  $\mathcal{E}\succeq\mathcal{F}$
holds if and only if, for any task defined on the parameter set $\Theta$, the
optimal gain is always larger in $\mathcal{E}$ tha in $\mathcal{F}$
\cite{Matsumoto}. 
\end{remark}

\section{Convergence of sequences of equivalence classes}

Given a sequence of CPTP maps $\Gamma_{i}$ $\in\mathcal{C}\left(
\mathcal{H}\right)  $ ($i=1,2,\cdots$), define recursively,
\begin{align*}
\mathcal{E}_{0}  &  :\mathcal{=E}=\left\{  \rho_{\theta};\theta\in
\Theta\right\}  ,\\
\,\mathcal{E}_{1}  &  :=\left\{  \rho_{\theta,1};\theta\in\Theta\right\}
,\,\,\rho_{\theta,1}:=\Gamma_{1}\left(  \rho_{\theta}\right)  ,\,\\
\mathcal{E}_{2}  &  :=\left\{  \rho_{\theta,2};\theta\in\Theta\right\}
,\,\,\rho_{\theta,2}:=\Gamma_{2}\left(  \rho_{\theta,1}\right)  ,
\end{align*}
and so on, and consider the sequence $\left\{  \left[  \mathcal{E}_{i}\right]
\right\}  _{i=0}^{\infty}$ .

\begin{theorem}
Let $\left\{  \left[  \mathcal{E}_{i}\right]  \right\}  _{i=0}^{\infty}$ be
defined as above, and $\Theta$ be any set. Then, there is $\mathcal{E}%
_{\infty}=\left\{  \rho_{\theta,\infty};\theta\in\Theta\right\}  $ such that
\begin{align}
\lim_{i\rightarrow\infty}\Delta\left(  \left[  \mathcal{E}_{i}\right]
,\left[  \mathcal{E}_{\infty}\right]  \right)   &  =0,\label{converge}\\
\mathcal{E}  &  \mathcal{=E}_{0}\succeq\mathcal{E}_{1}\succeq\cdots
\succeq\mathcal{E}_{\infty},\label{larger}\\
\Delta\left(  \left[  \mathcal{E}_{i_{1}}\right]  ,\left[  \mathcal{E}%
_{\infty}\right]  \right)   &  \geq\Delta\left(  \left[  \mathcal{E}_{i_{2}%
}\right]  ,\left[  \mathcal{E}_{\infty}\right]  \right)  \,,\,\,i_{1}\leq
i_{2} \label{d>d}%
\end{align}

\end{theorem}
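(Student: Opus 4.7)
The plan is to sidestep the fact that $\mathbb{E}(\Theta,\mathcal{H})$ need not be compact when $\Theta$ is infinite (Lemma \ref{lem:s-compact} gives compactness only in the finite case) by extracting a limit at the level of the iterated CPTP maps rather than at the level of the state families themselves. Set $\Lambda_n := \Gamma_n\circ\Gamma_{n-1}\circ\cdots\circ\Gamma_1\in\mathcal{C}(\mathcal{H})$, so that $\rho_{\theta,n}=\Lambda_n(\rho_\theta)$ uniformly in $\theta$. Since $\mathcal{C}(\mathcal{H})$ is a closed, bounded subset of the finite-dimensional Banach space of linear maps on $\mathcal{B}(\mathcal{H})$, it is compact; hence $\{\Lambda_n\}$ admits a subsequence $\Lambda_{n_k}$ converging in operator norm to some $\Lambda_\infty\in\mathcal{C}(\mathcal{H})$. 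I would then define $\mathcal{E}_\infty:=\{\Lambda_\infty(\rho_\theta);\theta\in\Theta\}$.

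Subsequential convergence of the equivalence classes is immediate from the $\theta$-uniform estimate
\[
\sup_{\theta\in\Theta}\|\Lambda_{n_k}(\rho_\theta)-\Lambda_\infty(\rho_\theta)\|_1\leq\|\Lambda_{n_k}-\Lambda_\infty\|_{1\to 1}\longrightarrow 0,
\]
whose right-hand side is independent of $\theta$ precisely because $\mathcal{H}$ is finite-dimensional; together with (\ref{Delta>Delta}) this yields $\Delta([\mathcal{E}_{n_k}],[\mathcal{E}_\infty])\to 0$.

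To upgrade this to convergence of the whole sequence I would use two-sided comparison along the Markov chain. For any fixed $n$, pick $n_k\geq n$; then $\mathcal{E}_n\succeq\mathcal{E}_{n_k}$ by construction, so $\delta(\mathcal{E}_n,\mathcal{E}_{n_k})=0$ by (\ref{L(r)=s-2}), and the triangle inequality (\ref{delta<delta+delta}) gives $\delta(\mathcal{E}_n,\mathcal{E}_\infty)\leq\delta(\mathcal{E}_{n_k},\mathcal{E}_\infty)\to 0$; thus $\delta(\mathcal{E}_n,\mathcal{E}_\infty)=0$ for every $n$, which is exactly the new content of (\ref{larger}). For the opposite direction, to each large $n$ I associate the largest $n_k\leq n$ (which tends to infinity with $n$); then $\mathcal{E}_{n_k}\succeq\mathcal{E}_n$ forces $\delta(\mathcal{E}_{n_k},\mathcal{E}_n)=0$, so by triangle $\delta(\mathcal{E}_\infty,\mathcal{E}_n)\leq\delta(\mathcal{E}_\infty,\mathcal{E}_{n_k})\to 0$ as $n\to\infty$. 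Combining the two directions gives (\ref{converge}). Finally, (\ref{d>d}) falls out of the same triangle trick: for $i_1\leq i_2$ one has $\delta(\mathcal{E}_{i_1},\mathcal{E}_{i_2})=0$, hence $\delta(\mathcal{E}_\infty,\mathcal{E}_{i_2})\leq\delta(\mathcal{E}_\infty,\mathcal{E}_{i_1})$, and since the opposite direction $\delta(\mathcal{E}_i,\mathcal{E}_\infty)$ vanishes for every $i$, this $\delta$-inequality coincides with the required $\Delta$-inequality.

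The real obstacle, in my view, is the conceptual step of choosing to compactify the dynamics rather than the family space: $\mathcal{C}(\mathcal{H})$ lives in the finite-dimensional space $\mathcal{B}(\mathcal{B}(\mathcal{H}))$ regardless of $|\Theta|$, and the operator-norm topology on it automatically produces control that is uniform in $\theta$. Once $\mathcal{E}_\infty$ has been produced in this way, the remainder is bookkeeping with the triangle inequality and the two-sided preorder along the chain; the data-processing property (\ref{d>dL}) is not actually needed on this route.
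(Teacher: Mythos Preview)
Your argument is correct and follows a genuinely different route from the paper's. The paper does not compactify in $\mathcal{C}(\mathcal{H})$; instead it introduces an auxiliary \emph{finite} parameter set $\Xi$ with $|\Xi|=d^2$, chosen so that $\{\rho^\xi:\xi\in\Xi\}$ is a basis of the Hermitian operators, applies Lemma~\ref{lem:s-compact} to the finite-parameter family $\tilde{\mathcal{E}}_i=\{\Gamma_i\circ\cdots\circ\Gamma_1(\rho^\xi)\}_{\xi\in\Xi}$ to extract an accumulation point $[\tilde{\mathcal{E}}_\infty]$ in $\mathbb{E}(\Xi,\mathcal{H})$, and then transports this back to the original $\Theta$ by expanding each $\rho_\theta=\sum_\xi\alpha_{\theta,\xi}\rho^\xi$ and setting $\rho_{\theta,\infty}:=\sum_\xi\alpha_{\theta,\xi}\rho_\infty^\xi$; the uniform bound $\sup_{\theta,\xi}|\alpha_{\theta,\xi}|<\infty$ plays the role that your operator-norm estimate plays. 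Your approach is more elementary and has the pleasant side effect that $\rho_{\theta,\infty}=\Lambda_\infty(\rho_\theta)$ is manifestly a density operator (in the paper's construction this is not immediate from the linear combination). On the other hand, the paper's reduction-to-a-finite-parameter-set idea is the one that survives in Section~6, where the underlying space is an arbitrary measurable space and the transition maps no longer live in a finite-dimensional (hence compact) set; there the compactness input is Le~Cam's theorem on $\mathbb{E}(\Theta)$, and your shortcut through $\mathcal{C}(\mathcal{H})$ has no analogue. The bookkeeping half of your proof---upgrading subsequential to full convergence and deducing (\ref{larger}) and (\ref{d>d}) via (\ref{delta<delta+delta}) and (\ref{L(r)=s-2})---is essentially identical to the paper's.
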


\begin{proof}
This proof is much draws upon the one of Lemma\thinspace2.1\ of
\cite{lindqvist:77}. Let $\Xi$ be a set with $\left\vert \Xi\right\vert
=d^{2}$, and $\left\{  \rho^{\xi};\xi\in\Xi\right\}  $ be a basis of the space
of Hermitian operators over $\mathcal{H}$ viewed as a real vector space. Then,
define
\begin{align*}
\mathcal{\tilde{E}}_{0}  &  :=\left\{  \rho^{\xi};\xi\in\Xi\right\}  ,\\
\,\mathcal{\tilde{E}}_{1}  &  :=\left\{  \rho_{1}^{\xi};\xi\in\Xi\right\}
,\,\,\rho_{1}^{\xi}:=\Gamma_{1}\left(  \rho^{\xi}\right)  ,\,\\
\mathcal{\tilde{E}}_{2}  &  :=\left\{  \rho_{2}^{\xi};\xi\in\Xi\right\}
,\,\,\rho_{2}^{\xi}:=\Gamma_{2}\left(  \rho_{1}^{\xi}\right)  ,
\end{align*}
and so on, and consider the sequence of equivalence classes of state families
$\left\{  \left[  \mathcal{\tilde{E}}_{i}\right]  \right\}  _{i=0}^{\infty}$ ,
where the equivalence class is defined by (\ref{def-equiv}).

Due to Lemmas\thinspace\ref{lem:s-compact} and \ref{lem:compact-accumulate},
there is an accumulation point of the set $\left\{  \left[  \mathcal{\tilde
{E}}_{i}\right]  \right\}  _{i=0}^{\infty}$. Let that accumulation point be
$\left[  \mathcal{\tilde{E}}_{\infty}\right]  $, where
\[
\mathcal{\tilde{E}}_{\infty}=\left\{  \rho_{\infty}^{\xi}\,;\xi\in\Xi\right\}
.
\]
Since $\mathbb{E}\left(  \Xi,\mathcal{H}\right)  $ is topologized by the
topology based on metric $\Delta$, it satisfies the first axiom of
countability, due to \ Lemma\thinspace\ref{lem:metric-countable}. Therefore,
by Lemma\thinspace\ref{lem:cluster}, there is a subsequence $\left\{
n_{i}\right\}  _{i=1}^{\infty}$ such that
\begin{equation}
\lim_{i\rightarrow\infty}\Delta\left(  \left[  \mathcal{\tilde{E}}_{n_{i}%
}\right]  ,\,\left[  \mathcal{\tilde{E}}_{\infty}\right]  \right)  =0.
\label{Delta->0}%
\end{equation}

Since $\left\{  \rho^{\xi};;\xi\in\Xi\right\}  $ is a basis of $\mathcal{S}%
\left(  \mathcal{H}\right)  $, there are real valued functions $\alpha
_{\theta,\xi}$ with
\[
\rho_{\theta}=\sum_{\xi\in\Xi}\alpha_{\theta,\xi}\rho^{\xi},
\]
and $\left\{  \tilde{\rho}^{\xi};\,\xi\in\Xi\right\}  $ be the dual base,
\[
\alpha_{\theta,\xi}=\mathrm{tr}\,\rho_{\theta}\tilde{\rho}_{\xi}.
\]
Then,
\[
\left\vert \alpha_{\theta,\xi}\right\vert \leq\left\Vert \rho_{\theta
}\right\Vert _{1}\left\Vert \tilde{\rho}_{\xi}\right\Vert =\left\Vert
\tilde{\rho}_{\xi}\right\Vert .\,
\]
Since $\Xi$ is a finite set,%
\begin{equation}
\sup_{\theta\in\Theta,\xi\in\Xi}\left\vert \alpha_{\theta,\xi}\right\vert
<\infty. \label{sup a <infty}%
\end{equation}
Since $\Gamma_{i}$ ($i=1,2,\cdots$) are linear, \
\[
\rho_{\theta,i}=\sum_{\xi\in\Xi}\alpha_{\theta,\xi}\rho_{i}^{\xi}.
\]
Define
\[
\rho_{\theta,\infty}:=\sum_{\xi\in\Xi}\alpha_{\theta,\xi}\rho_{\infty}^{\xi}.
\]

Observe, by (\ref{L(r)=s-2}), if $i_{1}\leq i_{2}$,
\begin{equation}
\delta\left(  \mathcal{E}_{i_{1}},\,\mathcal{E}_{i_{2}}\right)  =0.
\label{delta=0-2}%
\end{equation}
Therefore, by (\ref{delta<delta+delta}),
\begin{align}
\delta\left(  \mathcal{E}_{\infty},\mathcal{E}_{i_{2}}\,\right)   &
\leq\delta\left(  \mathcal{E}_{\infty},\mathcal{E}_{i_{1}}\,\right)
+\delta\left(  \mathcal{E}_{i_{1}},\mathcal{E}_{i_{2}}\right) \nonumber\\
&  =\delta\left(  \mathcal{E}_{\infty},\mathcal{E}_{i_{1}}\,\right)  .
\label{delta<delta}%
\end{align}

Therefore, by choosing $j$ so that $n_{j}\leq i$, we have%
\begin{align*}
\delta\left(  \mathcal{E}_{\infty},\mathcal{E}_{i}\,\right)   &  \leq
\delta\left(  \mathcal{E}_{\infty},\mathcal{E}_{n_{j}}\,\right) \\
&  =\inf_{\Lambda\in\mathcal{C}\left(  \mathcal{H}\right)  }\sup_{\theta
\in\Theta}\left\Vert \Lambda\left(  \sum_{\xi\in\Xi}\alpha_{\theta,\xi}%
\rho_{\infty}^{\xi}\right)  -\sum_{\xi\in\Xi}\alpha_{\theta,\xi}\rho_{n_{j}%
}^{\xi}\right\Vert _{1}\\
&  \leq\inf_{\Lambda\in\mathcal{C}\left(  \mathcal{H}\right)  }\sup_{\theta
\in\Theta}\sum_{\xi\in\Xi}\left\vert \alpha_{\theta,\xi}\right\vert \left\Vert
\Lambda\left(  \rho_{\infty}^{\xi}\right)  -\rho_{n_{j}}^{\xi}\right\Vert
_{1}\\
&  \leq d^{2}\sup_{\theta\in\Theta,\xi\in\Xi}\left\vert \alpha_{\theta,\xi
}\right\vert \inf_{\Lambda\in\mathcal{C}\left(  \mathcal{H}\right)  }\sup
_{\xi\in\Xi}\left\Vert \Lambda\left(  \rho_{\infty}^{\xi}\right)  -\rho
_{n_{j}}^{\xi}\right\Vert _{1}\\
&  =d^{2}\sup_{\theta\in\Theta,\xi\in\Xi}\left\vert \alpha_{\theta,\xi
}\right\vert \,\delta\left(  \,\mathcal{\tilde{E}}_{\infty},\mathcal{\tilde
{E}}_{n_{j}}\right)  ,
\end{align*}
which, combined with (\ref{Delta->0}) and (\ref{sup a <infty}), leads to
\begin{align}
\lim_{i\rightarrow\infty}\delta\left(  \mathcal{E}_{\infty},\mathcal{E}%
_{i}\,\right)   &  \leq d^{2}\sup_{\theta\in\Theta,\xi\in\Xi}\left\vert
\alpha_{\theta,\xi}\right\vert \,\lim_{n_{j}\rightarrow\infty}\delta\left(
\,\mathcal{\tilde{E}}_{\infty},\mathcal{\tilde{E}}_{n_{j}}\right) \nonumber\\
&  =0. \label{delta->0}%
\end{align}
Similarly, for any $i$, taking $j$ large so that $n_{j}\geq i$ holds, we have
\begin{align*}
\delta\left(  \mathcal{E}_{i},\,\mathcal{E}_{\infty}\right)   &  \leq
\delta\left(  \mathcal{E}_{i},\,\mathcal{E}_{n_{j}}\right)  +\delta\left(
\mathcal{E}_{n_{j}},\,\mathcal{E}_{\infty}\right) \\
&  =\delta\left(  \mathcal{E}_{n_{j}},\,\mathcal{E}_{\infty}\right) \\
&  =\inf_{\Lambda\in\mathcal{C}\left(  \mathcal{H}\right)  }\sup_{\theta
\in\Theta}\left\Vert \Lambda\left(  \sum_{\xi\in\Xi}\alpha_{\theta,\xi}%
\rho_{n_{j}}^{\xi}\right)  -\sum_{\xi\in\Xi}\alpha_{\theta,\xi}\rho_{\infty
}^{\xi}\right\Vert _{1}\\
&  \leq d^{2}\sup_{\theta\in\Theta,\xi\in\Xi}\left\vert \alpha_{\theta,\xi
}\right\vert \,\delta\left(  \mathcal{\tilde{E}}_{n_{j}},\,\mathcal{\tilde{E}%
}_{\infty}\right)  ,
\end{align*}
which, with the help of (\ref{Delta->0}) and (\ref{sup a <infty}), leads to%
\begin{align}
\delta\left(  \mathcal{E}_{i},\,\mathcal{E}_{\infty}\right)   &  \leq
d^{2}\sup_{\theta\in\Theta,\xi\in\Xi}\left\vert \alpha_{\theta,\xi}\right\vert
\,\lim_{j\rightarrow\infty}\,\delta\left(  \mathcal{\tilde{E}}_{n_{j}%
},\,\mathcal{\tilde{E}}_{\infty}\right) \nonumber\\
&  =0. \label{delta=0}%
\end{align}

Combining (\ref{delta->0}) and (\ref{delta=0}) leads to (\ref{converge}).
(\ref{delta=0-2}) and (\ref{delta=0}) implies (\ref{larger}).
(\ref{delta<delta}) and (\ref{delta=0}) leads to (\ref{d>d}).
\end{proof}

We say a quantum Markov process is \textit{weakly ergodic} if and only if the
state tends to be independent of the initial state, or
\[
\lim_{i\rightarrow\infty}\sup_{\rho,\rho^{\prime}}\left\Vert \Gamma_{i}%
\circ\cdots\circ\Gamma_{2}\circ\Gamma_{1}\left(  \rho\right)  -\Gamma_{i}%
\circ\cdots\circ\Gamma_{2}\circ\Gamma_{1}\left(  \rho^{\prime}\right)
\right\Vert _{1}=0.
\]
Weak ergodicity, by definition, is equivalent to the convergence to one-point
family $\mathcal{E}_{\ast}:=\left\{  \rho_{\ast};\theta\in\Theta\right\}  \,$.
This means the information about the initial state is completely lost.

\begin{theorem}
\label{th:ergodic}A quantum Markov process is weakly ergodic if and only if
$\mathcal{E}_{\ast}=\mathcal{E}_{\infty}$.
\end{theorem}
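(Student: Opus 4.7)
The plan rests on two observations about the one-point class $[\mathcal{E}_\ast]$. First, all constant families $\{\sigma;\theta\in\Theta\}$ lie in a single equivalence class: for any two states $\sigma_1,\sigma_2$ the replacer channel $\Lambda(\rho):=(\mathrm{tr}\,\rho)\,\sigma_2$ is CPTP and maps $\{\sigma_1;\theta\in\Theta\}$ to $\{\sigma_2;\theta\in\Theta\}$, so by (\ref{def-equiv}) any two constant families are equivalent. Second, applying this replacer to an arbitrary $\mathcal{F}=\{\sigma_\theta\}$ shows $\delta(\mathcal{F},\mathcal{E}_\ast)=0$, hence
\[
\Delta\bigl([\mathcal{F}],[\mathcal{E}_\ast]\bigr)=\delta(\mathcal{E}_\ast,\mathcal{F})=\inf_{\sigma\in\mathcal{S}(\mathcal{H})}\sup_{\theta\in\Theta}\|\sigma-\sigma_\theta\|_1,
\]
the ``radius'' of $\mathcal{F}$, which is comparable with the diameter $\sup_{\theta,\theta'}\|\sigma_\theta-\sigma_{\theta'}\|_1$ up to a factor of two.

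For ($\Rightarrow$), assume weak ergodicity and fix any $\theta_0\in\Theta$. Setting $\mathcal{F}_i:=\{\rho_{\theta_0,i};\theta\in\Theta\}\in[\mathcal{E}_\ast]$, the bound (\ref{Delta>Delta}) gives
\[
\Delta([\mathcal{E}_i],[\mathcal{E}_\ast])\leq\|\mathcal{E}_i-\mathcal{F}_i\|_1=\sup_\theta\|\rho_{\theta,i}-\rho_{\theta_0,i}\|_1\leq\sup_{\rho,\rho'}\|\Gamma_i\circ\cdots\circ\Gamma_1(\rho)-\Gamma_i\circ\cdots\circ\Gamma_1(\rho')\|_1\xrightarrow{i\to\infty}0.
\]
Combined with $\Delta([\mathcal{E}_i],[\mathcal{E}_\infty])\to 0$ from the preceding theorem, the triangle inequality (\ref{d<d+d}) and (\ref{Delta=0-equiv}) force $[\mathcal{E}_\infty]=[\mathcal{E}_\ast]$.

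For ($\Leftarrow$), I read the theorem as applied to the universal initial family $\mathcal{E}=\{\theta;\theta\in\mathcal{S}(\mathcal{H})\}$, so that $\rho_{\theta,i}=\Gamma_i\circ\cdots\circ\Gamma_1(\theta)$. From $[\mathcal{E}_\infty]=[\mathcal{E}_\ast]$ and the preceding theorem, $\Delta([\mathcal{E}_i],[\mathcal{E}_\ast])\to 0$; the radius formula then supplies $\sigma_i\in\mathcal{S}(\mathcal{H})$ with $\sup_\theta\|\sigma_i-\Gamma_i\circ\cdots\circ\Gamma_1(\theta)\|_1\to 0$, and the triangle inequality doubles this bound to recover weak ergodicity. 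The main delicacy is precisely this choice of $\mathcal{E}$: for a degenerate $\mathcal{E}$ (say a singleton), $[\mathcal{E}_\infty]=[\mathcal{E}_\ast]$ holds trivially without implying weak ergodicity, so the biconditional must be read for a sufficiently rich $\mathcal{E}$ such as the universal family above (or, equivalently, the basis family $\mathcal{\tilde{E}}$ from the preceding proof combined with the linear-decomposition argument used there).
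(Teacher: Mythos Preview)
Your argument is correct and follows essentially the same route as the paper: for ($\Rightarrow$) you both compare $\mathcal{E}_i$ to the constant family $\{\rho_{\theta_0,i}\}$ (the paper does this via an explicit replacer $\Lambda_i(\rho_\ast)=\rho_{\theta_0,i}$, you via (\ref{Delta>Delta})) and invoke uniqueness of limits in the metric space; for ($\Leftarrow$) you both bound the diameter of $\mathcal{E}_i$ by $2\,\delta(\mathcal{E}_\ast,\mathcal{E}_i)$ via the triangle inequality and then specialize to $\mathcal{E}=\mathcal{S}(\mathcal{H})$. Your explicit remark that the converse requires a sufficiently rich $\mathcal{E}$ is exactly the paper's ``Letting $\mathcal{E}=\mathcal{S}(\mathcal{H})$'' step, and your radius/diameter framing is a clean way to package the same computation.
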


\begin{proof}
Suppose weaky ergodicity holds. Fix $\theta_{0}\in\Theta$ , and let
$\Lambda_{i}$ be a CPTP map such that $\Lambda_{i}\left(  \rho_{\ast}\right)
=\rho_{\theta_{0},i}$. Then,
\begin{align*}
\delta\left(  \mathcal{E}_{\ast},\mathcal{E}_{i}\right)   &  =\inf_{\Lambda
}\left\Vert \Lambda\left(  \rho_{\ast}\right)  -\rho_{\theta,i}\right\Vert
_{1}\\
&  \leq\left\Vert \Lambda_{i}\left(  \rho_{\ast}\right)  -\rho_{\theta
,i}\right\Vert _{1}=\left\Vert \rho_{\theta_{0},i}-\rho_{\theta,i}\right\Vert
_{1}\rightarrow0.
\end{align*}
Define a CPTP map $\Lambda^{\prime}$ such that
\[
\Lambda^{\prime}\left(  \rho\right)  =\rho_{\ast}\,,\,\forall\rho
\in\mathcal{S}\left(  \mathcal{H}\right)  .
\]
Then
\begin{align*}
\delta\left(  \mathcal{E}_{i},\mathcal{E}_{\ast}\right)   &  =\inf_{\Lambda
}\left\Vert \Lambda\left(  \rho_{\theta,i}\right)  -\rho_{\ast}\right\Vert
_{1}\\
&  \leq\left\Vert \Lambda^{\prime}\left(  \rho_{\theta,i}\right)  -\rho_{\ast
}\right\Vert _{1}=0.
\end{align*}
Therefore, we have
\[
\Delta\left(  \mathcal{E}_{\ast},\mathcal{E}_{i}\right)  \rightarrow0.
\]
Since the strong topology is based on the distance $\Delta$, it is a Hausdorff
space. Therefore, any sequence has at most one convergent point. Hence,
$\mathcal{E}_{\ast}=\mathcal{E}_{\infty}$.

Conversely, suppose $\mathcal{E}_{\ast}=\mathcal{E}_{\infty}$. Then,
\begin{align*}
\sup_{\theta,\theta^{\prime}\in\Theta}\left\Vert \rho_{\theta,i}-\rho
_{\theta^{\prime},i}\right\Vert _{1}  &  \leq\sup_{\theta,\theta^{\prime}%
\in\Theta}\left(  \left\Vert \Lambda\left(  \rho_{\ast}\right)  -\rho
_{\theta,i}\right\Vert _{1}+\left\Vert \Lambda\left(  \rho_{\ast}\right)
-\rho_{\theta^{\prime},i}\right\Vert _{1}\right) \\
&  =2\sup_{\theta\in\Theta}\left\Vert \Lambda\left(  \rho_{\ast}\right)
-\rho_{\theta,i}\right\Vert _{1}%
\end{align*}
Since this holds for any $\Lambda$, we have
\begin{align*}
\sup_{\theta,\theta^{\prime}\in\Theta}\left\Vert \rho_{\theta,i}-\rho
_{\theta^{\prime},i}\right\Vert _{1}  &  \leq2\inf_{\Lambda}\sup_{\theta
\in\Theta}\left\Vert \Lambda\left(  \rho_{\ast}\right)  -\rho_{\theta
,i}\right\Vert _{1}\\
&  \leq2\Delta\left(  \mathcal{E}_{\infty},\mathcal{E}_{i}\right)
\rightarrow0.
\end{align*}
Letting $\mathcal{E}=\mathcal{S}\left(  \mathcal{H}\right)  $, we have weak
ergodicity. Thus the proof is complete.
\end{proof}

Finally, we show $\left[  \mathcal{E}_{\infty}\right]  $ is a fixed point if
the Markov process is homogeneous.

\begin{theorem}
\label{th:fixed-point}Suppose $\Gamma_{i}=\Gamma$ ($i=1,2,\cdots$). Then,
\[
\Gamma\left(  \mathcal{E}_{\infty}\right)  \equiv\mathcal{E}_{\infty}.
\]

\end{theorem}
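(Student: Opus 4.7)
The plan is to exploit the non-expansiveness of $\Gamma$ on $(\mathbb{E}(\Theta,\mathcal{H}),\Delta)$, which is already encoded in (\ref{d>dL}): for any CPTP map $\Lambda$, one has $\Delta([\Lambda(\mathcal{E})],[\Lambda(\mathcal{F})])\leq \Delta([\mathcal{E}],[\mathcal{F}])$. In particular $\Gamma$ descends to a continuous self-map of the quotient space, and the homogeneity assumption gives the crucial identity $\Gamma(\mathcal{E}_i)=\mathcal{E}_{i+1}$ for every $i$. So the problem reduces to showing that $\Gamma$, viewed as a continuous map on a Hausdorff space, sends a limit to the limit of images, and then invoking uniqueness of limits to identify $[\Gamma(\mathcal{E}_\infty)]$ with $[\mathcal{E}_\infty]$.

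Concretely, I would argue by a triangle-inequality squeeze. Using (\ref{d<d+d}) followed by (\ref{d>dL}),
\begin{align*}
\Delta\bigl([\Gamma(\mathcal{E}_\infty)],[\mathcal{E}_\infty]\bigr)
&\leq \Delta\bigl([\Gamma(\mathcal{E}_\infty)],[\Gamma(\mathcal{E}_i)]\bigr)
   + \Delta\bigl([\Gamma(\mathcal{E}_i)],[\mathcal{E}_\infty]\bigr)\\
&\leq \Delta\bigl([\mathcal{E}_\infty],[\mathcal{E}_i]\bigr)
   + \Delta\bigl([\mathcal{E}_{i+1}],[\mathcal{E}_\infty]\bigr).
\end{align*}
The previous theorem (equation (\ref{converge})) says both summands tend to $0$ as $i\to\infty$, so $\Delta([\Gamma(\mathcal{E}_\infty)],[\mathcal{E}_\infty])=0$, and (\ref{Delta=0-equiv}) then yields $\Gamma(\mathcal{E}_\infty)\equiv\mathcal{E}_\infty$.

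There is essentially no obstacle here; the work has all been done in setting up the metric $\Delta$ so that CPTP maps are non-expansive. The only bookkeeping to check is that the expression $[\Gamma(\mathcal{E}_\infty)]$ is well defined — i.e.\ that $\Gamma$ respects the equivalence relation — but this is immediate from (\ref{d>dL}) applied with $\mathcal{F}\equiv\mathcal{E}$. I would mention this briefly, then present the two-line squeeze above. One could alternatively phrase the argument as: since the strong topology is Hausdorff, the sequence $\{[\mathcal{E}_i]\}$ has at most one limit; as $[\Gamma(\mathcal{E}_i)]=[\mathcal{E}_{i+1}]$ both converges to $[\mathcal{E}_\infty]$ and, by continuity of $\Gamma$, converges to $[\Gamma(\mathcal{E}_\infty)]$, the two limits must coincide.
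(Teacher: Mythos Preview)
Your proof is correct and uses essentially the same ingredients as the paper's: the triangle inequality (\ref{d<d+d}), non-expansiveness of $\Gamma$ under $\Delta$ (\ref{d>dL}), the identity $\Gamma(\mathcal{E}_i)=\mathcal{E}_{i+1}$, and the convergence (\ref{converge}). The paper packages these into a contradiction argument (assuming $\Delta(\Gamma(\mathcal{E}_\infty),\mathcal{E}_\infty)=c$ and deriving $c/3\geq 2c/3$), whereas your direct squeeze is slightly cleaner and avoids the detour through (\ref{d>d}); but the underlying argument is the same.
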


\begin{proof}
Suppose
\[
\Delta\left(  \Gamma\left(  \mathcal{E}_{\infty}\right)  ,\mathcal{E}_{\infty
}\right)  =c\geq0.
\]
Choose $i$ so that
\begin{equation}
\Delta\left(  \mathcal{E}_{i},\mathcal{E}_{\infty}\right)  \leq\frac{c}{3}.
\label{delta<c/3}%
\end{equation}
By (\ref{d>d}),%
\begin{equation}
\Delta\left(  \mathcal{E}_{i+1},\mathcal{E}_{\infty}\right)  \leq\frac{c}{3}.
\label{delta<c/3-2}%
\end{equation}
Then,
\begin{align*}
\Delta\left(  \mathcal{E}_{i},\mathcal{E}_{\infty}\right)   &  \geq
\Delta\left(  \Gamma\left(  \mathcal{E}_{i}\right)  ,\Gamma\left(
\mathcal{E}_{\infty}\right)  \right) \\
&  =\Delta\left(  \mathcal{E}_{i+1},\Gamma\left(  \mathcal{E}_{\infty}\right)
\right) \\
&  \geq\Delta\left(  \Gamma\left(  \mathcal{E}_{\infty}\right)  ,\mathcal{E}%
_{\infty}\right)  -\Delta\left(  \mathcal{E}_{i+1},\mathcal{E}_{\infty}\right)
\\
&  \geq c-c/3=2c/3,
\end{align*}
where the inequality in the first line is by (\ref{d>dL}), the one in the
third line is by (\ref{d<d+d}), and the one in the fourth line is by
(\ref{delta<c/3-2}).

This, combined with (\ref{delta<c/3}), leads to
\[
\Delta\left(  \Gamma\left(  \mathcal{E}_{\infty}\right)  ,\mathcal{E}_{\infty
}\right)  =c=0.
\]
Thus we have the theorem.
\end{proof}

\section{Limits of information quantities}

\begin{theorem}
\label{th:dDconverge}Consider a $k$-point function
\[
D:\mathcal{S}\left(  \mathcal{H}\right)  \times\mathcal{S}\left(
\mathcal{H}\right)  \times\cdots\times\mathcal{S}\left(  \mathcal{H}\right)
\rightarrow%
\mathbb{R}
_{+}\cup\left\{  0\right\}
\]
which is monotone decreasing by CPTP maps. Suppose that
\begin{align}
&  \left\vert D\left(  X_{1},X_{2},\cdots,X_{k}\right)  -D\left(  Y_{1}%
,Y_{2},\cdots,Y_{k}\right)  \right\vert \nonumber\\
&  \leq f\left(  \left\Vert X_{1}-Y_{1}\right\Vert _{1},\left\Vert X_{2}%
-Y_{2}\right\Vert _{1},\cdots,\left\Vert X_{k}-Y_{k}\right\Vert _{1}\right)
\label{D-D<f}%
\end{align}
holds for any $X_{j},Y_{j}\in\mathcal{S}\left(  \mathcal{H}\right)  $
($j=1,2,\cdots,k$), with $f$ being continuous and
\begin{equation}
f\left(  0,0,\cdots,0\right)  =0. \label{f=0}%
\end{equation}
Then we have
\begin{equation}
\lim_{i\rightarrow\infty}D\left(  \rho_{\theta_{1},i},\cdots,\rho_{\theta
_{k},i}\right)  =D\left(  \rho_{\theta_{1},\infty},\cdots,\rho_{\theta
_{k},\infty}\right)  . \label{D->D}%
\end{equation}

\end{theorem}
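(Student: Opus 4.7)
The plan is to show the two inequalities $\liminf_i D(\rho_{\theta_1,i},\ldots,\rho_{\theta_k,i}) \geq D(\rho_{\theta_1,\infty},\ldots,\rho_{\theta_k,\infty})$ and $\limsup_i D(\rho_{\theta_1,i},\ldots,\rho_{\theta_k,i}) \leq D(\rho_{\theta_1,\infty},\ldots,\rho_{\theta_k,\infty})$ separately, drawing on the preceding theorem together with the two hypotheses on $D$: CPTP monotonicity and the modulus-of-continuity bound (\ref{D-D<f}).

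For the lower bound, I would invoke (\ref{larger}) directly: $\mathcal{E}_i \succeq \mathcal{E}_\infty$ means there exists an actual CPTP map $\Lambda$ (a single one, independent of $\theta$) with $\Lambda(\rho_{\theta,i}) = \rho_{\theta,\infty}$ for every $\theta$. Applying the CPTP monotonicity of $D$ to this $\Lambda$ at the $k$ points $\theta_1,\ldots,\theta_k$ immediately gives $D(\rho_{\theta_1,i},\ldots,\rho_{\theta_k,i}) \geq D(\rho_{\theta_1,\infty},\ldots,\rho_{\theta_k,\infty})$ for every $i$, whence the $\liminf$ inequality.

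For the upper bound, I would exploit (\ref{converge}) in the form $\delta(\mathcal{E}_\infty,\mathcal{E}_i) \to 0$. For each $i$ pick an approximate minimizer $\Lambda_i \in \mathcal{C}(\mathcal{H})$ with
\[
\sup_{\theta\in\Theta}\left\Vert \Lambda_i(\rho_{\theta,\infty}) - \rho_{\theta,i}\right\Vert_1 \leq \delta(\mathcal{E}_\infty,\mathcal{E}_i) + \tfrac{1}{i},
\]
so that in particular $\|\Lambda_i(\rho_{\theta_j,\infty}) - \rho_{\theta_j,i}\|_1 \to 0$ for each of the $k$ fixed indices $j=1,\ldots,k$. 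CPTP monotonicity of $D$ applied to $\Lambda_i$ yields $D(\Lambda_i(\rho_{\theta_1,\infty}),\ldots,\Lambda_i(\rho_{\theta_k,\infty})) \leq D(\rho_{\theta_1,\infty},\ldots,\rho_{\theta_k,\infty})$, and the hypothesis (\ref{D-D<f}) together with continuity of $f$ and $f(0,\ldots,0)=0$ gives
\[
\bigl|D(\rho_{\theta_1,i},\ldots,\rho_{\theta_k,i}) - D(\Lambda_i(\rho_{\theta_1,\infty}),\ldots,\Lambda_i(\rho_{\theta_k,\infty}))\bigr| \longrightarrow 0.
\]
Combining these two observations yields $\limsup_i D(\rho_{\theta_1,i},\ldots,\rho_{\theta_k,i}) \leq D(\rho_{\theta_1,\infty},\ldots,\rho_{\theta_k,\infty})$, and together with the lower bound this is (\ref{D->D}).

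I do not anticipate a genuine obstacle here: the only subtlety is ensuring that the infimum over $\Lambda$ in $\delta$ can be approached even when $\Theta$ is infinite, but the ``$+1/i$'' slack above handles this cleanly, and nowhere does the argument need the supremum over $\theta$ to be attained. The whole content of the theorem is really just the sandwich between exact CPTP monotonicity in one direction and continuity-plus-approximate monotonicity in the other.
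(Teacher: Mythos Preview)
Your proof is correct and follows essentially the same two-sided sandwich as the paper: the lower bound via (\ref{larger}) and CPTP monotonicity, and the upper bound via (\ref{converge}), CPTP monotonicity applied to a near-optimal $\Lambda$, and the continuity hypothesis (\ref{D-D<f}). Your explicit choice of an approximate minimizer $\Lambda_i$ with the $+1/i$ slack is in fact slightly cleaner than the paper's version, which pushes $\inf_{\Lambda}$ inside $f$ coordinatewise without justifying that step.
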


\begin{proof}
Due to monotonicity and positivity of $D$, the sequence $\left\{  D\left(
\rho_{\theta_{1},,i},\cdots,\rho_{\theta_{k},i}\right)  \right\}
_{i=1}^{\infty}$ is monotone decreasing and bounded from below. Therefore,
this sequence converges.

By (\ref{larger}),
\begin{equation}
D\left(  \rho_{\theta_{1},i},\cdots,\rho_{\theta_{k},i}\right)  \geq D\left(
\rho_{\theta_{1},\infty},\cdots,\rho_{\theta_{k},\infty}\right)  . \label{D>D}%
\end{equation}
Also,
\begin{align*}
&  \lim_{i\rightarrow\infty}\left\{  D\left(  \rho_{\theta_{1},i},\cdots
,\rho_{\theta_{k},i}\right)  -D\left(  \rho_{\theta_{1},\infty},\cdots
,\rho_{\theta_{k},\infty}\right)  \right\} \\
&  \leq\lim_{i\rightarrow\infty}\inf_{\Lambda\in\mathcal{C}\left(
\mathcal{H}\right)  }\left\{  D\left(  \rho_{\theta_{1},i},\cdots,\rho
_{\theta_{k},i}\right)  -D\left(  \Lambda\left(  \rho_{\theta_{1},\infty
}\right)  ,\cdots,\Lambda\left(  \rho_{\theta_{k},\infty}\right)  \right)
\right\} \\
&  \leq\lim_{i\rightarrow\infty}\inf_{\Lambda\in\mathcal{C}\left(
\mathcal{H}\right)  }f\left(  \left\Vert \rho_{\theta_{1},i}-\Lambda\left(
\rho_{\theta_{1},\infty}\right)  \right\Vert _{1},\cdots,\left\Vert
\rho_{\theta_{k},i}-\Lambda\left(  \rho_{\theta_{k},\infty}\right)
\right\Vert _{1}\right) \\
&  \leq\lim_{i\rightarrow\infty}f\left(  \inf_{\Lambda\in\mathcal{C}\left(
\mathcal{H}\right)  }\left\Vert \rho_{\theta_{1},i}-\Lambda\left(
\rho_{\theta_{1},\infty}\right)  \right\Vert _{1},\cdots,\inf_{\Lambda
\in\mathcal{C}\left(  \mathcal{H}\right)  }\left\Vert \rho_{\theta_{k}%
,i}-\Lambda\left(  \rho_{\theta_{k},\infty}\right)  \right\Vert _{1}\right) \\
&  =f\left(  \lim_{i\rightarrow\infty}\inf_{\Lambda\in\mathcal{C}\left(
\mathcal{H}\right)  }\left\Vert \rho_{\theta_{1},i}-\Lambda\left(
\rho_{\theta_{1},\infty}\right)  \right\Vert _{1},\cdots,\lim_{i\rightarrow
\infty}\inf_{\Lambda\in\mathcal{C}\left(  \mathcal{H}\right)  }\left\Vert
\rho_{\theta_{k},i}-\Lambda\left(  \rho_{\theta_{k},\infty}\right)
\right\Vert _{1}\right) \\
&  =0,
\end{align*}
where the first inequality is by the fact that $D$ is monotone decreasing by
CPTP maps, the one in the third line is by (\ref{D-D<f}), the identity in the
fifth line is due to continuity of $f$, and the inequality in the sixth line
is by (\ref{converge}) and (\ref{f=0}). Therefore,
\begin{equation}
\lim_{i\rightarrow\infty}D\left(  \rho_{\theta_{1},i},\cdots,\rho_{\theta
_{k},i}\right)  \leq D\left(  \rho_{\theta_{1},,\infty},\cdots,\rho
_{\theta_{k},\infty}\right)  . \label{D<D}%
\end{equation}
Combining (\ref{D>D}) and (\ref{D<D}), we obtain (\ref{D->D}).
\end{proof}

\begin{example}
With
\[
D\left(  \rho_{1},\rho_{2}\right)  :=\left\Vert \rho_{1}-\rho_{2}\right\Vert
_{1},
\]
the premise of Theorem\thinspace\ref{th:dDconverge} is obviously satisfied.
\end{example}

\begin{example}
Let
\[
F\left(  \rho_{1},\rho_{2}\right)  :=\mathrm{tr}\,\sqrt{\sqrt{\rho_{1}}%
\rho_{2}\sqrt{\rho_{1}}}\leq1.
\]
Recall
\begin{align}
1-F\left(  \rho_{1},\rho_{2}\right)   &  \leq\frac{1}{2}\left\Vert \rho
_{1}-\rho_{2}\right\Vert _{1}\leq\sqrt{1-F\left(  \rho_{1},\rho_{2}\right)
},\label{1-F<D<1-F}\\
\cos^{-1}F\left(  \rho_{1},\rho_{2}\right)   &  \leq\cos^{-1}F\left(  \rho
_{1}^{\prime},\rho_{2}\right)  +\cos^{-1}F\left(  \rho_{1},\rho_{1}\right)  .
\label{triangle}%
\end{align}
By (\ref{triangle}),
\begin{align*}
F\left(  \rho_{1},\rho_{2}\right)   &  \geq\cos(\cos^{-1}F\left(  \rho
_{1}^{\prime},\rho_{2}\right)  +\cos^{-1}F\left(  \rho_{1},\rho_{1}^{\prime
}\right)  )\\
&  =F\left(  \rho_{1}^{\prime},\rho_{2}\right)  F\left(  \rho_{1},\rho
_{1}^{\prime}\right)  -\left(  \sin\cos^{-1}F\left(  \rho_{1}^{\prime}%
,\rho_{2}\right)  \right)  \left(  \sin\cos^{-1}\left(  F\left(  \rho_{1}%
,\rho_{1}^{\prime}\right)  \right)  \right) \\
&  \geq F\left(  \rho_{1}^{\prime},\rho_{2}\right)  F\left(  \rho_{1},\rho
_{1}^{\prime}\right)  -\sin\cos^{-1}\left(  F\left(  \rho_{1},\rho_{1}%
^{\prime}\right)  \right) \\
&  =F\left(  \rho_{1}^{\prime},\rho_{2}\right)  F\left(  \rho_{1},\rho
_{1}^{\prime}\right)  -\sqrt{1-\left(  F\left(  \rho_{1},\rho_{1}^{\prime
}\right)  \right)  ^{2}}.
\end{align*}
Therefore, by (\ref{1-F<D<1-F}),
\begin{align*}
F\left(  \rho_{1}^{\prime},\rho_{2}\right)  -F\left(  \rho_{1},\rho
_{2}\right)   &  \leq F\left(  \rho_{1}^{\prime},\rho_{2}\right)  \left\{
1-F\left(  \rho_{1},\rho_{1}^{\prime}\right)  \right\}  +\sqrt{1-\left(
F\left(  \rho_{1},\rho_{1}^{\prime}\right)  \right)  ^{2}}\\
&  \leq\frac{1}{2}\left\Vert \rho_{1}-\rho_{1}^{\prime}\right\Vert _{1}%
+\sqrt{2\cdot\frac{1}{2}\left\Vert \rho_{1}-\rho_{1}^{\prime}\right\Vert _{1}%
}.
\end{align*}
Exchanging $\rho_{1}$ and $\rho_{1}^{\prime}$, \
\[
F\left(  \rho_{1},\rho_{2}\right)  -F\left(  \rho_{1}^{\prime},\rho
_{2}\right)  \leq\frac{1}{2}\left\Vert \rho_{1}-\rho_{1}^{\prime}\right\Vert
_{1}+\sqrt{\left\Vert \rho_{1}-\rho_{1}^{\prime}\right\Vert _{1}}.
\]
Therefore,
\[
\left\vert F\left(  \rho_{1},\rho_{2}\right)  -F\left(  \rho_{1}^{\prime}%
,\rho_{2}\right)  \right\vert \leq\frac{1}{2}\left\Vert \rho_{1}-\rho
_{1}^{\prime}\right\Vert _{1}+\sqrt{\left\Vert \rho_{1}-\rho_{1}^{\prime
}\right\Vert _{1}}.
\]
By the symmetry $F\left(  \rho_{1},\rho_{2}\right)  =F\left(  \rho_{2}%
,\rho_{1}\right)  $, we have an analogous upper bound to $\left\vert F\left(
\rho_{1}^{\prime},\rho_{2}\right)  -F\left(  \rho_{1}^{\prime},\rho
_{2}^{\prime}\right)  \right\vert $. Therefore,
\begin{align*}
&  \left\vert F\left(  \rho_{1},\rho_{2}\right)  -F\left(  \rho_{1}^{\prime
},\rho_{2}^{\prime}\right)  \right\vert \\
&  \leq\left\vert F\left(  \rho_{1},\rho_{2}\right)  -F\left(  \rho
_{1}^{\prime},\rho_{2}\right)  \right\vert +\left\vert F\left(  \rho
_{1}^{\prime},\rho_{2}\right)  -F\left(  \rho_{1}^{\prime},\rho_{2}^{\prime
}\right)  \right\vert \\
&  \leq\frac{1}{2}\left\Vert \rho_{1}-\rho_{1}^{\prime}\right\Vert _{1}%
+\sqrt{\left\Vert \rho_{1}-\rho_{1}^{\prime}\right\Vert _{1}}+\frac{1}%
{2}\left\Vert \rho_{2}-\rho_{2}^{\prime}\right\Vert _{1}+\sqrt{\left\Vert
\rho_{2}-\rho_{2}^{\prime}\right\Vert _{1}}\,.
\end{align*}
Thus, \
\[
D_{F}\left(  \rho_{1},\rho_{2}\right)  :=1-F\left(  \rho_{1},\rho_{2}\right)
\]
satisfies the premise of Theorem\thinspace\ref{th:dDconverge}.
\end{example}

\begin{example}
Let
\[
D_{\alpha}\left(  \rho_{1},\rho_{2}\right)  :=\frac{4}{1-\alpha^{2}}\left(
1-\mathrm{tr}\,\rho_{1}^{\frac{1-\alpha}{2}}\rho_{2}^{\frac{1+\alpha}{2}%
}\right)  ,\,\,\left(  -1<\alpha<1\right)
\]
which is monotone decreasing by CPTP maps \cite{Petz:1986}. By Lemmas
\ref{lem:ta}-\ref{lem:f-f<f},
\begin{align*}
\left\Vert \rho_{1}^{\frac{1+\alpha}{2}}-\left(  \rho_{1}^{\prime}\right)
^{\frac{1+\alpha}{2}}\right\Vert  &  \leq\left\Vert \rho_{1}-\rho_{1}^{\prime
}\right\Vert ^{\frac{1+\alpha}{2}}\\
&  \leq\left\Vert \rho_{1}-\rho_{1}^{\prime}\right\Vert _{1}^{\frac{1+\alpha
}{2}}.
\end{align*}
Therefore,
\begin{align*}
\left\vert D_{\alpha}\left(  \rho_{1},\rho_{2}\right)  -D_{\alpha}\left(
\rho_{1}^{\prime},\rho_{2}\right)  \right\vert  &  =\left\vert \mathrm{tr}%
\,\left\{  \rho_{1}^{\frac{1-\alpha}{2}}-\left(  \rho_{1}^{\prime}\right)
^{\frac{1+\alpha}{2}}\right\}  \rho_{2}^{\frac{1+\alpha}{2}}\right\vert \\
&  \leq\,\left\Vert \rho_{1}^{\frac{1-\alpha}{2}}-\left(  \rho_{1}^{\prime
}\right)  ^{\frac{1+\alpha}{2}}\right\Vert \left\Vert \rho_{2}^{\frac
{1+\alpha}{2}}\right\Vert _{1}\\
&  \leq\,\left\Vert \rho_{1}-\rho_{1}^{\prime}\right\Vert _{1}^{\frac
{1+\alpha}{2}}\mathrm{tr}\,\rho_{2}^{\frac{1+\alpha}{2}}.
\end{align*}
Similarly,
\[
\left\vert D_{\alpha}\left(  \rho_{1}^{\prime},\rho_{2}\right)  -D_{\alpha
}\left(  \rho_{1}^{\prime},\rho_{2}^{\prime}\right)  \right\vert
\leq\left\Vert \rho_{2}-\rho_{2}^{\prime}\right\Vert _{1}^{\frac{1-\alpha}{2}%
}\mathrm{tr}\,\left(  \rho_{1}^{\prime}\right)  ^{\frac{1+\alpha}{2}}.
\]
Therefore,
\begin{align*}
&  \left\vert D_{\alpha}\left(  \rho_{1},\rho_{2}\right)  -D_{\alpha}\left(
\rho_{1}^{\prime},\rho_{2}^{\prime}\right)  \right\vert \\
&  \leq\left\vert D_{\alpha}\left(  \rho_{1},\rho_{2}\right)  -D_{\alpha
}\left(  \rho_{1}^{\prime},\rho_{2}\right)  \right\vert +\left\vert D_{\alpha
}\left(  \rho_{1}^{\prime},\rho_{2}\right)  -D_{\alpha}\left(  \rho
_{1}^{\prime},\rho_{2}^{\prime}\right)  \right\vert \\
&  \leq d\left(  \left\Vert \rho_{1}-\rho_{1}^{\prime}\right\Vert _{1}%
^{\frac{1+\alpha}{2}}+\left\Vert \rho_{2}-\rho_{2}^{\prime}\right\Vert
_{1}^{\frac{1-\alpha}{2}}\right)  ,
\end{align*}
and the premise of Theorem\thinspace\ref{th:dDconverge} is satisfied.
\end{example}

\begin{example}
Let $D\left(  \rho_{1},\rho_{2}\right)  $ be a two point functional satisfying
the premise of Theorem\thinspace\ref{th:dDconverge}. Let us define
\[
D_{k}\left(  \rho_{1},\rho_{2},\cdots,\rho_{k}\right)  :=\sum_{i,j=1}%
^{k}a_{ij}D\left(  \rho_{i},\rho_{j}\right)  ,\,
\]
where $a_{ij}\geq0$. Then, $D_{k}$ satisfies the premise of Theorem\thinspace
\ref{th:dDconverge}.
\end{example}

\begin{example}
Let $D\left(  \rho_{1},\rho_{2}\right)  $ be a two point functional satisfying
the premise of Theorem\thinspace\ref{th:dDconverge}. Let us define%
\[
\overline{D_{k}}\left(  \rho_{1},\rho_{2},\cdots,\rho_{k}\right)
:=\inf_{\overline{\rho}\in\mathcal{S}\left(  \mathcal{H}\right)  }\max_{1\leq
j\leq k}D\left(  \rho_{j},\overline{\rho}\right)  .
\]
Then,
\begin{align*}
&  \overline{D_{k}}\left(  \rho_{1},\rho_{2},\cdots,\rho_{k}\right)
-\overline{D_{k}}\left(  \rho_{1}^{\prime},\rho_{2}^{\prime},\cdots,\rho
_{k}^{\prime}\right)  \\
&  =\inf_{\overline{\rho}\in\mathcal{S}\left(  \mathcal{H}\right)  }%
\sup_{\overline{\rho}^{\prime}\in\mathcal{S}\left(  \mathcal{H}\right)  }%
\max_{1\leq j\leq k}\min_{1\leq j^{\prime}\leq k}\left\{  D\left(  \rho
_{j},\overline{\rho}\right)  -D\left(  \rho_{j^{\prime}}^{\prime}%
,\overline{\rho}^{\prime}\right)  \right\}  .
\end{align*}
Therefore, letting $\overline{\rho}_{\varepsilon}\in\mathcal{S}\left(
\mathcal{H}\right)  $ be a state with
\[
\max_{1\leq j^{\prime}\leq k}D\left(  \rho_{j^{\prime}}^{\prime}%
,\overline{\rho}_{\varepsilon}\right)  \leq\inf_{\overline{\rho}\in
\mathcal{S}\left(  \mathcal{H}\right)  }\max_{1\leq j^{\prime}\leq k}D\left(
\rho_{j^{\prime}}^{\prime},\overline{\rho}\right)  +\varepsilon,
\]
we have
\begin{align*}
&  \overline{D_{k}}\left(  \rho_{1},\rho_{2},\cdots,\rho_{k}\right)
-\overline{D_{k}}\left(  \rho_{1}^{\prime},\rho_{2}^{\prime},\cdots,\rho
_{k}^{\prime}\right)  \\
&  \leq\max_{1\leq j\leq k}\min_{1\leq j^{\prime}\leq k}\left\{  D\left(
\rho_{j},\overline{\rho}_{\varepsilon}\right)  -D\left(  \rho_{j^{\prime}%
}^{\prime},\overline{\rho}_{\varepsilon}\right)  \right\}  +\varepsilon\\
&  \leq\max_{1\leq j\leq k}\left\{  D\left(  \rho_{j},\overline{\rho
}_{\varepsilon}\right)  -D\left(  \rho_{j}^{\prime},\overline{\rho
}_{\varepsilon}\right)  \right\}  +\varepsilon\\
&  \leq\sum_{j=1}^{k}\left\{  D\left(  \rho_{j},\overline{\rho}_{\varepsilon
}\right)  -D\left(  \rho_{j}^{\prime},\overline{\rho}_{\varepsilon}\right)
\right\}  +\varepsilon\\
&  \leq\sum_{j=1}^{k}f\left(  \left\Vert \rho_{j}-\rho_{j}^{\prime}\right\Vert
_{1},0\right)  +\varepsilon.
\end{align*}
Since $\varepsilon>0$ is arbitrary, \
\begin{align*}
&  \overline{D_{k}}\left(  \rho_{1},\rho_{2},\cdots,\rho_{k}\right)
-\overline{D_{k}}\left(  \rho_{1}^{\prime},\rho_{2}^{\prime},\cdots,\rho
_{k}^{\prime}\right)  \\
&  \leq\sum_{j=1}^{k}f\left(  \left\Vert \rho_{j}-\rho_{j}^{\prime}\right\Vert
_{1},0\right)  .
\end{align*}
Almost analogously, we also have
\begin{align*}
&  \overline{D_{k}}\left(  \rho_{1},\rho_{2},\cdots,\rho_{k}\right)
-\overline{D_{k}}\left(  \rho_{1}^{\prime},\rho_{2}^{\prime},\cdots,\rho
_{k}^{\prime}\right)  \\
&  \geq-\sum_{j=1}^{k}f\left(  \left\Vert \rho_{j}-\rho_{j}^{\prime
}\right\Vert _{1},0\right)  .
\end{align*}
Therefore, \ $\overline{D_{k}}$ satisfies the premise of \ Theorem\thinspace
\ref{th:dDconverge}.
\end{example}

\section{Classical Markov chains over arbitrary measurable space}

We had shown the convergence of quantum Markov chain in case of finite
dimensional Hilbert space. Next target maybe the analogous statement for
infinite dimensional Hilbert space. Since this is very difficult, instead, we
study classcal Markov chain, but over the arbitrary measurable space.  

Inhomogeneous classical Markov chain with initial probability measure $P$ is
defined by
\[
P_{1}:=\Gamma_{1}\left(  P\right)  ,\,P_{2}:=\Gamma_{2}\left(  P_{1}\right)
,\cdots,
\]
where $P_{t}$ is a probability measure in measureable space $\left(
X,\mathfrak{X}\right)  $, and $\Gamma_{i}$ is a positive linear map from the
space $ca\left(  X,\mathfrak{X}\right)  $ of bounded signed measures over
$\left(  X,\mathfrak{X}\right)  $ to $ca\left(  X,\mathfrak{X}\right)  $, such
that $\left\Vert \Gamma_{i}\left(  \mu\right)  \right\Vert _{1}=$ $\left\Vert
\mu\right\Vert _{1}$ for any positive element $\mu$ of $ca\left(
X,\mathfrak{X}\right)  $.

Consider familis of probability measures $\mathcal{E}:=\left\{  P_{\theta
};\theta\in\Theta\right\}  $ over measurable space $\left(  X,\mathfrak{X}%
\right)  $ and $\mathcal{F}:=\left\{  Q_{\theta};\theta\in\Theta\right\}  $
over $\left(  Y,\mathfrak{Y}\right)  $. Then the relations $\succeq$, $\equiv
$, and two point functions $\Delta$, $\delta$, are defined in analogy to the
ones in Section\thinspace\ref{sec:equivalence-class}. They also satisfy
(\ref{Delta=0-equiv})-(\ref{d>dL}). The equivalence relation $\equiv$ induces
an equvalence class of familis of probability meaures. \ We denote by $\left[
\mathcal{E}\right]  $ the equivalence class to which $\mathcal{E}$ belongs,
and $\mathbb{E}\left(  \Theta\right)  $ denotes the set of alll equvalence
classes of probability distribution families parameterized by elements of
$\Theta$. (That $\mathbb{E}\left(  \Theta\right)  $ is a set and not a proper
class is known\thinspace\cite{LeCam:86}.) For each $\Theta_{0}\subset\Theta$,
define $\mathcal{E}_{\Theta_{0}}:=\left\{  P_{\theta};\theta\in\Theta
_{0}\right\}  $, and denote by $\Pi_{\Theta_{0}}$ the map which sends $\left[
\mathcal{E}\right]  \mathcal{\in\,}\mathbb{E}\left(  \Theta\right)  $ to
$\left[  \mathcal{E}_{\Theta_{0}}\right]  \in\mathbb{E}\left(  \Theta
_{0}\right)  $. Suppose $\Theta_{0}$ is a finite set, and furnish
$\mathbb{E}\left(  \Theta_{0}\right)  $ with the \textit{strong topology},
i.e., the topology induced by the distance $\Delta$. Then, the \textit{weak
topology} of $\mathbb{E}\left(  \Theta\right)  $ is the coarsest topology
which makes $\Pi_{\Theta_{0}}$ continuous for each finite subset $\Theta
_{0}\subset\Theta$. Put differently, the base of the weak topology is in the
form of
\begin{equation}
\bigcap_{\kappa\in K}\Pi_{\Theta_{\kappa}}^{-1}\left(  U_{\kappa}\right)
\label{E-base}%
\end{equation}
where each $\Theta_{\kappa}$ $\subset\Theta$ is a finite subset of $\Theta$,
$K$ is a set of indeces with $\left\vert K\right\vert <\infty$, and each
$U_{\kappa}$ is an open set in $\mathbb{E}\left(  \Theta_{\kappa}\right)  $.

\begin{lemma}
\label{lem:E-compact}\cite{LeCam:86}Let $\Theta$ be a set. Then $\mathbb{E}%
\left(  \Theta\right)  $ is a compact Hausdorff space relative to the weak topology.
\end{lemma}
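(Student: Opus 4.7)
The plan is to present $\mathbb{E}(\Theta)$ as a closed subspace of a Tychonoff product indexed by finite subsets of $\Theta$, and conclude compactness from there. First I would verify, by the classical analog of Lemma~\ref{lem:s-compact}, that for every finite $\Theta_0 \subset \Theta$ the space $\mathbb{E}(\Theta_0)$ is compact Hausdorff in the strong topology: Hausdorffness is immediate from the metric $\Delta$, and compactness follows because an equivalence class in $\mathbb{E}(\Theta_0)$ is determined by its Blackwell measure, which lives in the compact set of probability measures on a simplex of dimension $|\Theta_0|-1$. Then by Tychonoff the product
\[
\mathbb{P} := \prod_{\Theta_0 \subset \Theta,\ |\Theta_0|<\infty} \mathbb{E}(\Theta_0)
\]
is compact Hausdorff, and the diagonal map $\Phi : \mathbb{E}(\Theta) \to \mathbb{P}$, $[\mathcal{E}] \mapsto ([\mathcal{E}_{\Theta_0}])_{\Theta_0}$, is by the very definition of the weak topology via the base~(\ref{E-base}) a continuous injection that is a homeomorphism onto its image.

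For the Hausdorff property of $\mathbb{E}(\Theta)$ itself it is enough to know that $\Phi$ is injective. If $[\mathcal{E}] \neq [\mathcal{F}]$, say $\mathcal{E} \not\succeq \mathcal{F}$, the classical randomization criterion over arbitrary measurable spaces implies that this failure is already visible on some finite $\Theta_0 \subset \Theta$, so $\Pi_{\Theta_0}([\mathcal{E}]) \neq \Pi_{\Theta_0}([\mathcal{F}])$. Disjoint metric neighborhoods of the two images in $\mathbb{E}(\Theta_0)$ then pull back under $\Pi_{\Theta_0}$ to disjoint open sets in $\mathbb{E}(\Theta)$ separating $[\mathcal{E}]$ and $[\mathcal{F}]$.

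Compactness now reduces to showing that $\Phi(\mathbb{E}(\Theta))$ is closed in $\mathbb{P}$. Any limit point $(y_{\Theta_0})_{\Theta_0}$ automatically satisfies the obvious consistency relations, since for $\Theta_1 \subset \Theta_0$ the restriction map $\mathbb{E}(\Theta_0) \to \mathbb{E}(\Theta_1)$ is continuous and agrees on $\Phi(\mathbb{E}(\Theta))$ with the coordinate projections of $\mathbb{P}$. The main obstacle, and essentially the only genuinely nontrivial step, is the projective-limit construction: producing from the compatible system $(y_{\Theta_0})$ a single family $\mathcal{E}_\infty = \{P_{\theta,\infty}\}_{\theta\in\Theta}$ of probability measures on a common measurable space whose restriction to each finite $\Theta_0$ lies in the class $y_{\Theta_0}$. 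This is where I would invoke LeCam's representation of an experiment as a standard measure on the simplex of likelihood ratios, which lets coherent representatives of the $y_{\Theta_0}$'s be glued by a Kolmogorov-type extension; once such an $\mathcal{E}_\infty$ is produced, $\Phi([\mathcal{E}_\infty]) = (y_{\Theta_0})$, giving closedness of the image and hence compactness of $\mathbb{E}(\Theta)$.
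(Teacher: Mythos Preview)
The paper does not prove this lemma at all: it is stated with a citation to LeCam~\cite{LeCam:86} and used as a black box. So there is no ``paper's own proof'' to compare against; your outline is a sketch of the standard argument from LeCam's theory of experiments, which is exactly what the citation points to.

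Your sketch is sound in outline and identifies the right ingredients: compactness of $\mathbb{E}(\Theta_0)$ for finite $\Theta_0$ via the Blackwell (standard) measure correspondence, the Tychonoff embedding, injectivity of $\Phi$ via the fact that deficiency is determined by finite subfamilies (so $\delta(\mathcal{E},\mathcal{F})=\sup_{\Theta_0\ \mathrm{finite}}\delta(\mathcal{E}_{\Theta_0},\mathcal{F}_{\Theta_0})$), and closedness of the image via a projective-limit construction on standard measures. The last step is, as you say, the genuinely nontrivial one, and in LeCam's treatment it is handled not quite by a Kolmogorov extension on a single sample space but by building a conical measure (an ``abstract $L$-space'' experiment) from the compatible system of standard measures and then invoking that every such object is equivalent to an honest experiment. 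That refinement aside, your plan matches the cited source.
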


\begin{lemma}
\label{lem:E-converge}A sequence $\left\{  \left[  \mathcal{E}_{i}\right]
\right\}  _{i=0}^{\infty}$ in $\mathbb{E}\left(  \Theta\right)  $ converges to
$\left[  \mathcal{E}_{\infty}\right]  $ if and only if for each finite subset
$\Theta_{0}$, $\left\{  \Pi_{\Theta_{0}}\left(  \left[  \mathcal{E}%
_{i}\right]  \right)  \right\}  _{i=0}^{\infty}$ converges to $\Pi_{\Theta
_{0}}\left(  \left[  \mathcal{E}_{\infty}\right]  \right)  $ relative to the
strong topology.
\end{lemma}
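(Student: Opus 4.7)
The plan is to unwind the definition of the \emph{weak topology} given just before the lemma and show that this characterization of convergence is just the standard fact about convergence in an initial topology generated by a family of maps.

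For the $(\Rightarrow)$ direction, I would argue almost by definition. The weak topology on $\mathbb{E}(\Theta)$ is declared to be the coarsest topology making each projection $\Pi_{\Theta_0}$ continuous (for $\Theta_0\subset\Theta$ finite). Hence if $[\mathcal{E}_i]\to[\mathcal{E}_\infty]$ in the weak topology, continuity of $\Pi_{\Theta_0}$ immediately gives $\Pi_{\Theta_0}([\mathcal{E}_i])\to\Pi_{\Theta_0}([\mathcal{E}_\infty])$ in the strong topology of $\mathbb{E}(\Theta_0)$, since strong topology is how $\mathbb{E}(\Theta_0)$ was furnished.

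For the $(\Leftarrow)$ direction, I would take an arbitrary weak-open neighborhood $O$ of $[\mathcal{E}_\infty]$ and show $[\mathcal{E}_i]\in O$ eventually. By the description of the base of the weak topology given in the paper in the form (\ref{E-base}), there exist finite subsets $\Theta_{\kappa}\subset\Theta$ ($\kappa\in K$) with $|K|<\infty$ and strong-open sets $U_{\kappa}\subset\mathbb{E}(\Theta_{\kappa})$ such that
\[
[\mathcal{E}_\infty]\in\bigcap_{\kappa\in K}\Pi_{\Theta_{\kappa}}^{-1}(U_{\kappa})\subset O.
\]
For each $\kappa\in K$, the hypothesis gives $\Pi_{\Theta_{\kappa}}([\mathcal{E}_i])\to\Pi_{\Theta_{\kappa}}([\mathcal{E}_\infty])\in U_{\kappa}$ in the strong topology, so there is $N_{\kappa}$ with $\Pi_{\Theta_{\kappa}}([\mathcal{E}_i])\in U_{\kappa}$ for all $i\geq N_{\kappa}$. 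Setting $N:=\max_{\kappa\in K}N_{\kappa}$ (finite because $|K|<\infty$), we obtain $[\mathcal{E}_i]\in\bigcap_{\kappa}\Pi_{\Theta_{\kappa}}^{-1}(U_{\kappa})\subset O$ for all $i\geq N$.

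There is no real obstacle here; both implications are instances of the universal property of an initial topology. The only thing worth being careful about is that the $(\Leftarrow)$ direction uses finiteness of $K$ to pass from the convergence of each individual projection to convergence inside a finite intersection of pre-images. If one tried to replace $K$ by an infinite index set, this step would fail, but the base described in (\ref{E-base}) is exactly designed to have $|K|<\infty$, so the argument goes through.
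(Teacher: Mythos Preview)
Your proof is correct and follows essentially the same approach as the paper's own proof: the paper dismisses the ``only if'' direction as trivial (your continuity argument makes this explicit), and for the ``if'' direction it also picks a basic open set of the form (\ref{E-base}), obtains indices $N_\kappa$ from convergence of each projection, and sets $N=\max_{\kappa\in K}N_\kappa$. Your write-up is in fact slightly cleaner in that you explicitly require the basic set to contain $[\mathcal{E}_\infty]$ and to sit inside a given neighborhood $O$, a point the paper leaves implicit.
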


\begin{proof}
Since `only if ' is trivial, we show ` if\thinspace'. We show that, for any
set in the form of (\ref{E-base}), there is $N$ such that
\begin{equation}
\left\{  \left[  \mathcal{E}_{i}\right]  \right\}  _{i=N}^{\infty}%
\subset\bigcap_{\kappa\in K}\Pi_{\Theta_{\kappa}}^{-1}\left(  U_{\kappa
}\right)  , \label{E-convergence}%
\end{equation}
assuming that $\left\{  \Pi_{\Theta_{\kappa}}\left(  \left[  \mathcal{E}%
_{i}\right]  \right)  \right\}  _{i=0}^{\infty}$ converges to $\Pi
_{\Theta_{\kappa}}\left(  \left[  \mathcal{E}_{\infty}\right]  \right)  $ for
each $\kappa\in K$. By assumption, for any open set $U_{\kappa}$ in
$\mathbb{E}\left(  \Theta_{\kappa}\right)  $, there is $N_{\kappa}$ such that
\[
\left\{  \Pi_{\Theta_{\kappa}}\left(  \left[  \mathcal{E}_{i}\right]  \right)
\right\}  _{i=N_{\kappa}}^{\infty}\subset U_{\kappa}\text{.}%
\]
Therefore,
\[
\left\{  \left[  \mathcal{E}_{i}\right]  \right\}  _{i=N_{^{\kappa}}}^{\infty
}\subset\Pi_{\Theta_{\kappa}}^{-1}\left(  U_{\kappa}\right)  .
\]
Therefore, setting $N:=\max_{\kappa\in K}N_{\kappa}$, we have
\ref{E-convergence}. Thus, the proof is complete.
\end{proof}

We consider families of probability measures
\begin{align*}
\mathcal{E}_{0}  &  :=\mathcal{E=}\left\{  P_{\theta};\theta\in\Theta\right\}
,\\
\mathcal{E}_{1}  &  :=\left\{  P_{\theta,1};\theta\in\Theta\right\}
,P_{\theta,1}=\Gamma_{1}\left(  P_{\theta,0}\right)  ,\\
\mathcal{E}_{2}  &  :=\left\{  P_{\theta,2};\theta\in\Theta\right\}
,P_{\theta,2}=\Gamma_{2}\left(  P_{\theta,1}\right)  ,\cdots,
\end{align*}
and son on.

\begin{theorem}
Let $\left\{  \mathcal{E}_{i}\right\}  _{i=0}^{\infty}$ be as above and
$\Theta$ be any set. Then, for any finite subset $\Theta_{0}$, we have
\begin{equation}
\lim_{i\rightarrow\infty}\Delta\left(  \Pi_{\Theta_{0}}\left(  \left[
\mathcal{E}_{i}\right]  \right)  ,\Pi_{\Theta_{0}}\left(  \left[
\mathcal{E}_{\infty}\right]  \right)  \right)  =0. \label{strong-conv-sub}%
\end{equation}
or equivalently, the sequence $\left\{  \left[  \mathcal{E}_{i}\right]
\right\}  _{i=0}^{\infty}$ in $\mathbb{E}\left(  \Theta\right)  $ converges
relative to the weak topology.
\end{theorem}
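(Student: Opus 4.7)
The plan is to combine compactness of $\mathbb{E}(\Theta)$ in the weak topology (Lemma~\ref{lem:E-compact}) with the monotonicity $\mathcal{E}_{i_{1}}\succeq\mathcal{E}_{i_{2}}$ for $i_{1}\le i_{2}$ (witnessed by $\Gamma_{i_{2}}\circ\cdots\circ\Gamma_{i_{1}+1}$), thereby reducing the statement to a subsequence-limit argument on each finite restriction, in the same spirit as the earlier finite-dimensional convergence theorem. First I would invoke Lemma~\ref{lem:E-compact} to extract a cluster point $[\mathcal{E}_{\infty}]$ of $\{[\mathcal{E}_{i}]\}_{i=0}^{\infty}$ in $\mathbb{E}(\Theta)$; by Lemma~\ref{lem:E-converge}, the desired weak convergence is equivalent to $\Delta(\Pi_{\Theta_{0}}([\mathcal{E}_{i}]),\Pi_{\Theta_{0}}([\mathcal{E}_{\infty}]))\to 0$ for every finite $\Theta_{0}\subset\Theta$, so it remains to verify strong convergence on each such restriction.

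For a fixed finite $\Theta_{0}$, Lemma~\ref{lem:E-compact} applied at $\Theta_{0}$ shows that $\mathbb{E}(\Theta_{0})$ is compact, and because $\Theta_{0}$ is finite its weak topology coincides with the metric strong topology induced by $\Delta$. Since $\Pi_{\Theta_{0}}$ is weak-to-strong continuous (by the definition of the weak topology), the cluster-point property of $[\mathcal{E}_{\infty}]$ descends to $\Pi_{\Theta_{0}}([\mathcal{E}_{\infty}])$ as a metric cluster point of $\{\Pi_{\Theta_{0}}([\mathcal{E}_{i}])\}$, and in a metric space this produces a subsequence $\{n_{j}\}$ with $\Delta(\Pi_{\Theta_{0}}([\mathcal{E}_{n_{j}}]),\Pi_{\Theta_{0}}([\mathcal{E}_{\infty}]))\to 0$. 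To upgrade subsequence convergence to full-sequence convergence I would mirror the $\delta$-triangle-inequality argument from the proof of the quantum convergence theorem: since $\Pi_{\Theta_{0}}(\mathcal{E}_{i_{1}})\succeq\Pi_{\Theta_{0}}(\mathcal{E}_{i_{2}})$ for $i_{1}\le i_{2}$, the one-sided $\delta$-distance between monotonically ordered restrictions vanishes by (\ref{L(r)=s-2}); choosing a subsequence index $n_{j}\ge i$ in the triangle bound for $\delta(\Pi_{\Theta_{0}}([\mathcal{E}_{i}]),\Pi_{\Theta_{0}}([\mathcal{E}_{\infty}]))$, and $n_{j}\le i$ in the symmetric bound for $\delta(\Pi_{\Theta_{0}}([\mathcal{E}_{\infty}]),\Pi_{\Theta_{0}}([\mathcal{E}_{i}]))$, kills the nonconvergent summand on each side and forces both one-sided distances to zero, hence $\Delta\to 0$. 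Lemma~\ref{lem:E-converge} then yields the weak convergence in $\mathbb{E}(\Theta)$.

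The main obstacle, relative to the earlier finite-dimensional quantum theorem, is that a probability measure on a general $(X,\mathfrak{X})$ admits no finite real-linear basis expansion, so one cannot literally construct $\mathcal{E}_{\infty}$ by propagating a finite test family $\tilde{\mathcal{E}}$ through the chain and pulling back a subsequential limit, as was done there. Lemma~\ref{lem:E-compact} is the abstract replacement that furnishes a single $[\mathcal{E}_{\infty}]\in\mathbb{E}(\Theta)$ simultaneously compatible with every finite $\Theta_{0}$; without this deep ingredient, one could only produce per-$\Theta_{0}$ limits separately and would still have to argue, via a Kolmogorov-type consistency, that they glue to an element of $\mathbb{E}(\Theta)$. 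The remaining ingredients---weak-to-strong continuity of $\Pi_{\Theta_{0}}$, equality of weak and strong topology on finite-$\Theta_{0}$ slices, and the triangle inequality for $\delta$---transfer verbatim from Section~\ref{sec:equivalence-class}.
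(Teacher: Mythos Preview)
Your proposal is correct and follows essentially the same route as the paper's proof: invoke compactness of $\mathbb{E}(\Theta)$ to obtain an accumulation point $[\mathcal{E}_{\infty}]$, push it through the continuous projection $\Pi_{\Theta_{0}}$ to a metric cluster point, extract a convergent subsequence in the metric space $\mathbb{E}(\Theta_{0})$, and then use the monotonicity $\delta(\Pi_{\Theta_{0}}([\mathcal{E}_{i_{1}}]),\Pi_{\Theta_{0}}([\mathcal{E}_{i_{2}}]))=0$ for $i_{1}\le i_{2}$ together with the triangle inequality (with $n_{j}\ge i$ and $n_{j}\le i$ respectively) to upgrade to full-sequence convergence. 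Your closing paragraph correctly identifies why Lemma~\ref{lem:E-compact} replaces the finite-basis construction used in the quantum theorem.
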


\begin{proof}
Since $\mathbb{E}\left(  \Theta\right)  $ is compact by Lemma\thinspace
\ref{lem:E-compact}, there is an accumulation point $\left[  \mathcal{E}%
_{\infty}\right]  $ of the sequence $\left\{  \left[  \mathcal{E}_{i}\right]
\right\}  _{i=0}^{\infty}$. Since $\Pi_{\Theta_{0}}$ is continuous for each
finite set $\Theta_{0}$, $\Pi_{\Theta_{0}}\left(  \left[  \mathcal{E}_{\infty
}\right]  \right)  $ is an accumulation point of \ $\left\{  \Pi_{\Theta_{0}%
}\left(  \left[  \mathcal{E}_{i}\right]  \right)  \right\}  _{i=0}^{\infty}$.
Since $\mathbb{E}\left(  \Theta_{0}\right)  $ satisfies the first axiom
countability, there is a subsequence $\left\{  n_{i}\right\}  _{i=0}^{\infty}$
such that
\[
\lim_{i\rightarrow\infty}\Delta\left(  \Pi_{\Theta_{0}}\left(  \left[
\mathcal{E}_{n_{i}}\right]  \right)  ,\Pi_{\Theta_{0}}\left(  \left[
\mathcal{E}_{\infty}\right]  \right)  \right)  =0.
\]
For any $i_{1}\leq i_{2}$%
\[
\delta\left(  \Pi_{\Theta_{0}}\left(  \left[  \mathcal{E}_{i_{1}}\right]
\right)  ,\Pi_{\Theta_{0}}\left(  \left[  \mathcal{E}_{i_{2}}\right]  \right)
\right)  =0.
\]
Therefore, taking $j$ with $n_{j}\leq i$ ,
\begin{align*}
\delta\left(  \Pi_{\Theta_{0}}\left(  \left[  \mathcal{E}_{\infty}\right]
\right)  ,\Pi_{\Theta_{0}}\left(  \left[  \mathcal{E}_{i}\right]  \right)
\right)   &  \leq\delta\left(  \Pi_{\Theta_{0}}\left(  \left[  \mathcal{E}%
_{\infty}\right]  \right)  ,\Pi_{\Theta_{0}}\left(  \left[  \mathcal{E}%
_{n_{j}}\right]  \right)  \right)  +\delta\left(  \Pi_{\Theta_{0}}\left(
\left[  \mathcal{E}_{n_{j}}\right]  \right)  ,\Pi_{\Theta_{0}}\left(  \left[
\mathcal{E}_{i}\right]  \right)  \right) \\
&  =\delta\left(  \Pi_{\Theta_{0}}\left(  \left[  \mathcal{E}_{\infty}\right]
\right)  ,\Pi_{\Theta_{0}}\left(  \left[  \mathcal{E}_{n_{j}}\right]  \right)
\right) \\
&  \leq\Delta\left(  \Pi_{\Theta_{0}}\left(  \left[  \mathcal{E}_{n_{j}%
}\right]  \right)  ,\Pi_{\Theta_{0}}\left(  \left[  \mathcal{E}_{\infty
}\right]  \right)  \right)  \rightarrow0,\text{ }i\rightarrow\infty
,n_{j}\rightarrow\infty.
\end{align*}
Also, taking $j$ with $n_{j}\geq i,$
\begin{align*}
\delta\left(  \Pi_{\Theta_{0}}\left(  \left[  \mathcal{E}_{i}\right]  \right)
,\Pi_{\Theta_{0}}\left(  \left[  \mathcal{E}_{\infty}\right]  \right)
\right)   &  \leq\delta\left(  \Pi_{\Theta_{0}}\left(  \left[  \mathcal{E}%
_{i}\right]  \right)  ,\Pi_{\Theta_{0}}\left(  \left[  \mathcal{E}_{n_{j}%
}\right]  \right)  \right)  +\delta\left(  \Pi_{\Theta_{0}}\left(  \left[
\mathcal{E}_{n_{j}}\right]  \right)  ,\Pi_{\Theta_{0}}\left(  \left[
\mathcal{E}_{\infty}\right]  \right)  \right) \\
&  =\delta\left(  \Pi_{\Theta_{0}}\left(  \left[  \mathcal{E}_{n_{j}}\right]
\right)  ,\Pi_{\Theta_{0}}\left(  \left[  \mathcal{E}_{\infty}\right]
\right)  \right) \\
&  \leq\Delta\left(  \Pi_{\Theta_{0}}\left(  \left[  \mathcal{E}_{n_{j}%
}\right]  \right)  ,\Pi_{\Theta_{0}}\left(  \left[  \mathcal{E}_{\infty
}\right]  \right)  \right)  \rightarrow0,\text{ }n_{j}\rightarrow\infty.
\end{align*}
Therefore, we have (\ref{strong-conv-sub}).
\end{proof}

\bigskip

We say a Markov process is \textit{weakly ergodic} if and only if
\[
\lim_{i\rightarrow\infty}\sup_{P,P^{\prime}}\left\Vert \Gamma_{i}\circ
\cdots\circ\Gamma_{2}\circ\Gamma_{1}\left(  P\right)  -\Gamma_{i}\circ
\cdots\circ\Gamma_{2}\circ\Gamma_{1}\left(  P^{\prime}\right)  \right\Vert
_{1}=0,
\]
and $L^{1}$-\textit{weakly ergodic} if and only if
\[
\forall P_{,}P^{\prime},\,\,\lim_{i\rightarrow\infty}\left\Vert \Gamma
_{i}\circ\cdots\circ\Gamma_{2}\circ\Gamma_{1}\left(  P\right)  -\Gamma
_{i}\circ\cdots\circ\Gamma_{2}\circ\Gamma_{1}\left(  P^{\prime}\right)
\right\Vert _{1}=0
\]

\begin{theorem}
Let $\mathcal{E}_{\ast}$ be a state family such that
\[
\mathcal{E}_{\ast}=\left\{  P_{\ast};\theta\in\Theta\right\}  .
\]
A Markov process is weakly ergodic if and only if, for any $\mathcal{E}$,
$\left\{  \left[  \mathcal{E}_{i}\right]  \right\}  _{i=1}^{\infty}%
$\ converges to $\left[  \mathcal{E}_{\ast}\right]  $ relative to the strong
topology. Also. a Markov process is $L_{1}$-weakly ergodic if and only if, for
any $\mathcal{E}$, $\left\{  \left[  \mathcal{E}_{i}\right]  \right\}
_{i=1}^{\infty}$\ converges to $\left[  \mathcal{E}_{\ast}\right]  $ relative
to the weak topology.
\end{theorem}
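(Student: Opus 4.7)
The plan is to imitate the quantum proof of Theorem \ref{th:ergodic} twice, once with uniform control (yielding the strong topology equivalence) and once with pointwise control on finite restrictions (yielding the weak topology equivalence).

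For the first equivalence, I would first assume weak ergodicity and show $\Delta(\mathcal{E}_{\ast},\mathcal{E}_{i})\to 0$. To bound $\delta(\mathcal{E}_{\ast},\mathcal{E}_{i})$, fix any $\theta_{0}\in\Theta$ and choose a classical transition kernel $\Lambda_{i}$ with $\Lambda_{i}(P_{\ast})=P_{\theta_{0},i}$; this exists because $\mathcal{E}_{\ast}$ is a one-point family, so any kernel sending $P_{\ast}$ to that particular measure works. Then $\delta(\mathcal{E}_{\ast},\mathcal{E}_{i})\le\sup_{\theta}\|P_{\theta_{0},i}-P_{\theta,i}\|_{1}$, which tends to zero by weak ergodicity. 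The other direction $\delta(\mathcal{E}_{i},\mathcal{E}_{\ast})=0$ is obtained by taking $\Lambda'$ to be the kernel sending every measure to $P_{\ast}$. Conversely, if strong convergence holds for every $\mathcal{E}$, apply the hypothesis to the family of all probability measures and exploit the triangle-style estimate $\sup_{P,P'}\|\Gamma_{i}\cdots\Gamma_{1}(P)-\Gamma_{i}\cdots\Gamma_{1}(P')\|_{1}\le 2\inf_{\Lambda}\sup_{P}\|\Lambda(P_{\ast})-\Gamma_{i}\cdots\Gamma_{1}(P)\|_{1}\le 2\Delta(\mathcal{E}_{\ast},\mathcal{E}_{i})\to 0$, exactly as in the quantum case.

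For the second equivalence, I would invoke Lemma \ref{lem:E-converge} to reduce weak convergence of $\{[\mathcal{E}_{i}]\}$ to strong convergence of $\Pi_{\Theta_{0}}([\mathcal{E}_{i}])\to\Pi_{\Theta_{0}}([\mathcal{E}_{\ast}])$ for every finite $\Theta_{0}\subset\Theta$. For the forward direction, $L^{1}$-weak ergodicity gives $\|P_{\theta,i}-P_{\theta',i}\|_{1}\to 0$ for each fixed pair $\theta,\theta'$; because $\Theta_{0}$ is finite, this pointwise convergence automatically upgrades to uniform convergence over $\Theta_{0}$, and the argument of part one applied to the restricted family yields strong convergence of the restrictions. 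For the converse, given any two probability measures $P,P'$, embed them into a family parameterized by $\Theta_{0}=\{1,2\}\subset\Theta$ with $P_{1}=P$, $P_{2}=P'$; the weak-topology convergence assumption together with Lemma \ref{lem:E-converge} gives strong convergence of $\Pi_{\Theta_{0}}([\mathcal{E}_{i}])$ to the one-point class, and then $\|\Gamma_{i}\cdots\Gamma_{1}(P)-\Gamma_{i}\cdots\Gamma_{1}(P')\|_{1}\to 0$ by the same trick as in part one, now with $\Theta_{0}$ in place of the full parameter set.

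No step is deep, and no analytic obstacle is expected beyond confirming that the "constant-to-$P_{\ast}$" kernel is indeed a legitimate element of the classical CPTP-analog (a degenerate Markov kernel, which satisfies $\|\Gamma(\mu)\|_{1}=\|\mu\|_{1}$ on positive measures). The main point requiring care is the bookkeeping of the uniform-versus-pointwise dichotomy: weak ergodicity is a uniform statement over all initial measures and corresponds to the strong topology, while $L^{1}$-weak ergodicity is pointwise in $(P,P')$ and therefore only matches the weak topology, whose neighborhoods depend on finitely many coordinates at a time. Getting the $\sup_{\theta}$, $\inf_{\Lambda}$, and the restriction to $\Theta_{0}$ in the right order is where mistakes are most likely to appear.
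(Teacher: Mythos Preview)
Your proposal is correct and follows essentially the same route as the paper: the first equivalence is handled exactly as in Theorem~\ref{th:ergodic}, and for the second equivalence you reduce via Lemma~\ref{lem:E-converge} to finite restrictions $\Theta_{0}$, exploit finiteness to upgrade pointwise to uniform convergence in the forward direction, and in the converse take a two-point parameter set to recover $L^{1}$-weak ergodicity. The only cosmetic difference is that the paper simply sets $\Theta=\{1,2\}$ in the converse rather than embedding $\{1,2\}$ as a subset of a larger $\Theta$, but the argument is the same either way.
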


\begin{proof}
The first statement is proved by the argument almost parallel to the proof of
Theorem\thinspace\ref{th:ergodic}. Thus we only prove the second statement.
Suppose the Markov chain is $L_{1}$-weakly ergodic. Fix a finite subset
$\Theta_{0}\subset\Theta$, and $\theta_{0}\in\Theta_{0}$ .Let $\Lambda_{i}$ be
a CPTP map with $\Lambda_{i}\left(  P_{\ast}\right)  =P_{\theta_{0},i}$ .
Then,
\begin{align*}
&  \sup_{\theta\in\Theta_{0}}\left\Vert P_{\theta,i}-\Lambda_{i}\left(
P_{\ast}\right)  \right\Vert _{1}\\
&  \leq\sum_{\theta\in\Theta_{0}}\left\Vert P_{\theta,i}-\Lambda_{i}\left(
P_{\ast}\right)  \right\Vert _{1}\\
&  =\sum_{\theta\in\Theta_{0}}\left\Vert P_{\theta,i}-P_{\theta_{0}%
,i}\right\Vert _{1}\rightarrow0,
\end{align*}
or equivalently,
\[
\lim_{i\rightarrow\infty}\delta\left(  \Pi_{\Theta_{0}}\left(  \mathcal{E}%
_{\ast}\right)  ,\Pi_{\Theta_{0}}\left(  \mathcal{E}_{i}\right)  \right)  =0.
\]
On the other hand, denoting by $\Lambda_{\ast}$ the CPTP map with
$\Lambda_{\ast}\left(  \rho\right)  =\rho_{\ast}$ for any $\rho_{\ast}$,
\[
\delta\left(  \Pi_{\Theta_{0}}\left(  \mathcal{E}_{i}\right)  ,\Pi_{\Theta
_{0}}\left(  \,\mathcal{E}_{\ast}\right)  \right)  \leq\sup_{\theta\in
\Theta_{0}}\left\Vert \Lambda_{\ast}\left(  \rho_{\theta,i}\right)
-\rho_{\ast}\right\Vert _{1}=0.
\]
Therefore, we have
\[
\lim_{i\rightarrow\infty}\Delta\left(  \Pi_{\Theta_{0}}\left(  \mathcal{E}%
_{\ast}\right)  ,\Pi_{\Theta_{0}}\left(  \mathcal{E}_{i}\right)  \right)  =0.
\]
Since this is true for any finite subset $\Theta_{0}\subset\Theta$, by
Lemma\thinspace\ref{lem:E-converge}, $\left\{  \left[  \mathcal{E}_{i}\right]
\right\}  _{i=1}^{\infty}$\ converges to $\left[  \mathcal{E}_{\ast}\right]  $
relative to the weak topology.

Next, suppose that, for any $\mathcal{E}$, $\left\{  \left[  \mathcal{E}%
_{i}\right]  \right\}  _{i=1}^{\infty}$\ converges to $\left[  \mathcal{E}%
_{\ast}\right]  $ relative to the weak topology. Especially, let
$\Theta=\left\{  1,2\right\}  $ and $P_{1}=P,\,P_{2}=P^{\prime}$. Then, since
$\Theta$ is finite set, we have
\[
\lim_{i\rightarrow\infty}\Delta\left(  \mathcal{E}_{\ast},\mathcal{E}%
_{i}\right)  \geq\lim_{i\rightarrow\infty}\delta\left(  \mathcal{E}_{\ast
},\mathcal{E}_{i}\right)  =0
\]
Then, we have
\begin{align*}
&  \left\Vert \Gamma_{i}\circ\cdots\circ\Gamma_{2}\circ\Gamma_{1}\left(
P\right)  -\Gamma_{i}\circ\cdots\circ\Gamma_{2}\circ\Gamma_{1}\left(
P^{\prime}\right)  \right\Vert _{1}\\
&  =\left\Vert P_{1,i}-P_{2,i}\right\Vert _{1}\leq\left\Vert P_{1,i}%
-\Lambda_{i}\left(  P_{\ast}\right)  \right\Vert _{1}+\left\Vert
P_{2,i}-\Lambda_{i}\left(  P_{\ast}\right)  \right\Vert _{1}\\
&  \leq\sup_{\theta\in\Theta}\left\Vert P_{\theta,i}-\Lambda_{i}\left(
P_{\ast}\right)  \right\Vert _{1}%
\end{align*}
Since this holds for any $\Lambda_{i}$, we have
\begin{align*}
&  \left\Vert \Gamma_{i}\circ\cdots\circ\Gamma_{2}\circ\Gamma_{1}\left(
P\right)  -\Gamma_{i}\circ\cdots\circ\Gamma_{2}\circ\Gamma_{1}\left(
P^{\prime}\right)  \right\Vert \\
&  \leq\inf_{\Lambda_{i}}\sup_{\theta\in\Theta}\left\Vert P_{\theta,i}%
-\Lambda_{i}\left(  P_{\ast}\right)  \right\Vert _{1}=\delta\left(
\mathcal{E}_{\ast},\mathcal{E}_{i}\right)  \rightarrow0.
\end{align*}
Therefore, the Markov chain is $L_{1}$-weakly ergodic.
\end{proof}

\begin{theorem}
Suppose $\Gamma_{i}=\Gamma$ ($i=1,2,\cdots$). Then,
\[
\Gamma\left(  \mathcal{E}_{\infty}\right)  \equiv\mathcal{E}_{\infty}.
\]

\end{theorem}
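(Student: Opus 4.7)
The plan is to mimic the proof of Theorem~\ref{th:fixed-point} on every finite projection, and then to glue the finite-subset conclusions back into a single equivalence on all of $\Theta$.

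First I would observe that $\Gamma$ acts pointwise in $\theta$, so that $\Pi_{\Theta_0}\circ\Gamma=\Gamma\circ\Pi_{\Theta_0}$ for every finite $\Theta_0\subset\Theta$, and that by the classical analog of (\ref{d>dL}) $\Gamma$ descends to a $\Delta$-contraction on the relevant spaces of equivalence classes. The previously established strong convergence (\ref{strong-conv-sub}) provides $\Delta(\Pi_{\Theta_0}([\mathcal{E}_i]),\Pi_{\Theta_0}([\mathcal{E}_\infty]))\to 0$, and exactly as in the derivation of (\ref{d>d}) from (\ref{L(r)=s-2}) and (\ref{delta<delta+delta}) in the quantum section, this $\Delta$-distance is monotone non-increasing in $i$.

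Second, I would fix a finite $\Theta_0$ and copy the argument of Theorem~\ref{th:fixed-point} verbatim inside $\mathbb{E}(\Theta_0)$. Setting $c:=\Delta(\Pi_{\Theta_0}([\Gamma(\mathcal{E}_\infty)]),\Pi_{\Theta_0}([\mathcal{E}_\infty]))$ and choosing $i$ large enough that $\Delta(\Pi_{\Theta_0}([\mathcal{E}_i]),\Pi_{\Theta_0}([\mathcal{E}_\infty]))\le c/3$, the three-step chain $\Delta(\Pi_{\Theta_0}([\mathcal{E}_i]),\Pi_{\Theta_0}([\mathcal{E}_\infty]))\ge \Delta(\Pi_{\Theta_0}([\mathcal{E}_{i+1}]),\Pi_{\Theta_0}([\Gamma(\mathcal{E}_\infty)]))\ge c-c/3=2c/3$, obtained by applying (\ref{d>dL}), the commutation $\Pi_{\Theta_0}\circ\Gamma=\Gamma\circ\Pi_{\Theta_0}$, (\ref{d<d+d}), and the monotonicity noted above, forces $c=0$. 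Thus $\Pi_{\Theta_0}([\Gamma(\mathcal{E}_\infty)])=\Pi_{\Theta_0}([\mathcal{E}_\infty])$ for every finite $\Theta_0\subset\Theta$.

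Finally, to pass from agreement on every finite projection to $\Gamma(\mathcal{E}_\infty)\equiv\mathcal{E}_\infty$ in $\mathbb{E}(\Theta)$, I would invoke Hausdorffness of the weak topology from Lemma~\ref{lem:E-compact} together with the description (\ref{E-base}) of its basic open sets: two classes whose images under $\Pi_{\Theta_0}$ coincide for every finite $\Theta_0$ cannot be separated by any weak-open set, so they must be equal. The most delicate step I expect is precisely this last one---the identification of $\mathbb{E}(\Theta)$ with the projective limit of its finite-dimensional sub-experiments, which is the content of Le~Cam's result quoted as Lemma~\ref{lem:E-compact}---while the rest is a direct, clean translation of the finite-dimensional quantum argument.
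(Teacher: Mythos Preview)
Your proposal is correct and follows essentially the same route as the paper: first run the contraction argument of Theorem~\ref{th:fixed-point} on each finite projection $\Pi_{\Theta_0}$ (using the commutation $\Pi_{\Theta_0}\circ\Gamma=\Gamma\circ\Pi_{\Theta_0}$, the convergence (\ref{strong-conv-sub}), and the classical analogs of (\ref{d>dL}), (\ref{d<d+d}), (\ref{d>d})) to obtain $\Pi_{\Theta_0}([\Gamma(\mathcal{E}_\infty)])=\Pi_{\Theta_0}([\mathcal{E}_\infty])$, and then use Hausdorffness of the weak topology on $\mathbb{E}(\Theta)$ together with the form (\ref{E-base}) of its base to conclude global equivalence. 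The paper carries out the final separation step slightly more explicitly, but the strategy and the ingredients are identical to yours.
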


\begin{proof}
By (\ref{strong-conv-subfamily}) and using the almost parallel argument as the
proof of Theorem \ref{th:fixed-point} leads to%
\[
\Pi_{\Theta_{0}}\left(  \Gamma\left(  \mathcal{E}_{\infty}\right)  \right)
=\Gamma\left(  \Pi_{\Theta_{0}}\left(  \mathcal{E}_{\infty}\right)  \right)
=\Pi_{\Theta_{0}}\left(  \mathcal{E}_{\infty}\right)  .
\]
Observe $\mathbb{E}\left(  \Theta\right)  $ is a Hausforff space relative to
weak topology by Lemma\thinspace\ref{lem:E-compact}. Therefore, if
$\Gamma\left(  \mathcal{E}_{\infty}\right)  \not \equiv \mathcal{E}_{\infty}%
$,
\begin{align}
\Gamma\left(  \mathcal{E}_{\infty}\right)   &  \in\bigcap_{\kappa\in K}%
\Pi_{\Theta_{\kappa}}^{-1}\left(  U_{\kappa}\right)  ,\,\mathcal{E}_{\infty
}\in\bigcap_{\kappa\in K^{\prime}}\Pi_{\Theta_{\kappa}}^{-1}\left(  U_{\kappa
}^{\prime}\right)  ,\,\label{include}\\
&  \bigcap_{\kappa\in K}\Pi_{\Theta_{\kappa}}^{-1}\left(  U_{\kappa}\right)
\cap\bigcap_{\kappa\in K^{\prime}}\Pi_{\Theta_{\kappa}}^{-1}\left(  U_{\kappa
}^{\prime}\right)  \label{no-overlap}%
\end{align}
holds where $\Theta_{\kappa}$ is a finite subset of $\Theta$, \ and $K$,
$K^{\prime}$ is a finite set of indeces, and each $U_{\kappa}$, $U_{\kappa
}^{\prime}$ is an open set in $\mathbb{E}\left(  \Theta_{\kappa}\right)  $.
For (\ref{no-overlap}) to hold, it is necessary that $K$ and $K^{\prime}$
share at least one element $\kappa_{0}$. Also, it is necessary that
\[
U_{\kappa_{0}}\cap U_{\kappa_{0}}^{\prime}=\emptyset.
\]
By (\ref{include}), we have to have $\Pi_{\Theta_{0}}\left(  \Gamma\left(
\mathcal{E}_{\infty}\right)  \right)  \in U_{\kappa_{0}}$ and $\Pi_{\Theta
_{0}}\left(  \mathcal{E}_{\infty}\right)  \in U_{\kappa_{0}}^{\prime}$ . Since
$\mathbb{E}\left(  \Theta_{\kappa}\right)  $ is a Hausforff space, this means
\
\[
\Pi_{\Theta_{0}}\left(  \Gamma\left(  \mathcal{E}_{\infty}\right)  \right)
\neq\Pi_{\Theta_{0}}\left(  \mathcal{E}_{\infty}\right)  .
\]
This contradicts with the assumption. Therefore, we have to have
$\Gamma\left(  \mathcal{E}_{\infty}\right)  \equiv\mathcal{E}_{\infty}$.
\end{proof}

\section{Discussions}

We had found out that any quantum Markov chain "converges", if you introduce
a proper equivalence class. This equivalence class, as had been pointed out,
has good decision theoretic meaning \cite{Matsumoto}. The mode of convergence
in case of finite dimensional Hilbert space is "strong convergence", that is,
convergence with respect to the metric $\Delta$ (\ref{converge}). But, even
for the classical Markov chains, such strong statement does not hold in
general.  Instead, what we could prove was weak convergence
(\ref{strong-conv-sub}). Hence, also in quantum case, this is what we can
expect at most. The author conjecture weak convergence (\ref{strong-conv-sub})
holds for any quantum Markov chains over  arbitrary Hilbert spaces. 

Also, we could characterize weak ergodicity and $L^{1}$-weak ergodicity in
view of convergence of the state family, in case of finite dimensional quantum
systems and arbitrary classical systems. The author conjectures the similar
assertion should holds for arbitary quantum systems.

\appendix  

\section{General topology}

A set in a topological space is a \textit{neighborhood} of a point $x$ if and
only if the set contains an open set to which $s$ belongs. A \textit{base} of
the topology is a family of open sets such that for each point $s$, every
neighborhood of $s$ contains a member of the family. A topological space
satisfies the \textit{first axiom of countability} if the topology has a
countable base.

\begin{lemma}
\label{lem:metric-countable}(11, Chapter\thinspace4 of \cite{kelley}) Every
pseudo-metric space satisfies the first axiom of countability.
\end{lemma}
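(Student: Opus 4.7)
The plan is to produce, at each point $x$ of a pseudo-metric space $(X, d)$, a countable family of open neighborhoods which is cofinal (under reverse inclusion) in the filter of all neighborhoods of $x$. The natural candidate is the sequence of open balls of radius $1/n$ centered at $x$, namely
\[
\mathcal{B}_{x} := \{ B(x, 1/n) : n = 1, 2, \ldots \}, \qquad B(x, r) := \{ y \in X : d(x, y) < r \}.
\]

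First I would verify that each $B(x, 1/n)$ is an open neighborhood of $x$: it contains $x$ because $d(x, x) = 0 < 1/n$, and openness follows from the triangle inequality in the standard way (if $y \in B(x, 1/n)$ and $r := 1/n - d(x,y) > 0$, then $B(y, r) \subseteq B(x, 1/n)$). Second, given an arbitrary neighborhood $N$ of $x$, by the definition of neighborhood recalled in the appendix there is an open set $U$ with $x \in U \subseteq N$, and by the definition of the pseudo-metric topology there exists $\varepsilon > 0$ with $B(x, \varepsilon) \subseteq U$. Invoking the Archimedean property to choose $n$ with $1/n < \varepsilon$ yields $B(x, 1/n) \subseteq B(x, \varepsilon) \subseteq N$. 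Hence $\mathcal{B}_{x}$ is a countable local base at $x$, which is exactly what the first axiom of countability asks for.

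There is no real obstacle here; the argument is a one-line observation once the balls of radius $1/n$ are written down. The only conceptual subtlety worth flagging is that the \emph{global} family $\bigcup_{x \in X} \mathcal{B}_{x}$ need not be countable when $X$ itself is uncountable, so one cannot hope to upgrade this proof to the second axiom of countability; however, the first axiom is a purely local condition, and is satisfied by the above construction regardless of the cardinality of $X$.
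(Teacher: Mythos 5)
Your proof is correct and is the standard argument: the paper gives no proof of this lemma at all, merely citing Kelley, and the countable local base of balls $B(x,1/n)$ is exactly the construction used there. Your closing caveat is worth keeping, since the paper's appendix phrases the first axiom as ``the topology has a countable base,'' which read literally would be second countability (and false for uncountable discrete pseudo-metric spaces); the pointwise local-base version you prove is the one actually invoked later via Lemma~\ref{lem:cluster}.
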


A point $s$ is an \textit{accumulation point} of a subset $A$ of a topological
space if and only if every neighborhood of $s$ contains points of $A$ other
than $s$.

\begin{lemma}
\label{lem:cluster}(8, Chapter 2 of \cite{kelley}) Suppose the first axiom of
countability is satisfied. Then, $s$ is an accumulation point of a sequence
$S$ if and only if there is a subsequence converging to $s$.
\end{lemma}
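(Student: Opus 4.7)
The plan is the standard extraction argument for first-countable spaces, with the twist that the paper has stated the accumulation-point condition for sets rather than directly for sequences as nets. The easy direction is essentially a restatement of convergence; the hard direction uses a nested countable neighborhood base at $s$.

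For the ``if'' direction, suppose a subsequence $(s_{n_k})_{k=1}^{\infty}$ converges to $s$. For any neighborhood $U$ of $s$, there is $K_{0}$ with $s_{n_{k}}\in U$ for all $k\geq K_{0}$, so $U$ contains infinitely many terms of $S$; provided the sequence is not eventually constant equal to $s$ (in which case $s$ is trivially the limit of its own constant subsequence, and the statement is vacuous or can be handled separately), at least one of these terms is different from $s$, so $s$ is an accumulation point in the paper's sense.

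For the ``only if'' direction, I would invoke Lemma \ref{lem:metric-countable} to obtain a countable neighborhood base $\{V_{n}\}_{n=1}^{\infty}$ at $s$ and replace it with the descending base $U_{n}:=V_{1}\cap\cdots\cap V_{n}$, each $U_{n}$ still an open neighborhood of $s$ and $U_{1}\supset U_{2}\supset\cdots$. I would then recursively build an increasing index sequence $n_{1}<n_{2}<\cdots$ with $s_{n_{k}}\in U_{k}$: given $n_{1}<\cdots<n_{k-1}$, the accumulation-point hypothesis says $U_{k}$ contains some element of $S$ different from $s$, and a standard excision argument (removing any finite set of distinct points of $S\setminus\{s\}$ from $U_{k}$ still leaves an open neighborhood of $s$, in a $T_{1}$ ambient space) upgrades this to show that $U_{k}$ meets $S\setminus\{s\}$ in infinitely many distinct points, hence contains at least one $s_{m}$ with $m>n_{k-1}$; take that $m$ as $n_{k}$. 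Finally, for any neighborhood $W$ of $s$ there is $N$ with $U_{N}\subset W$ by the base property, and $s_{n_{k}}\in U_{k}\subset U_{N}\subset W$ for all $k\geq N$, so $s_{n_{k}}\to s$.

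The main obstacle is the mismatch between the set-theoretic definition of accumulation point recorded in the appendix (``every neighborhood contains a point of $S$ other than $s$'') and the cluster-point property for sequences that the extraction really needs (infinitely many \emph{indices} $n$ with $s_{n}$ in each neighborhood). In the intended applications of the lemma, namely the metric spaces $(\mathbb{E}(\Theta,\mathcal{H}),\Delta)$ and $(\mathbb{E}(\Theta_{0}),\Delta)$, the space is Hausdorff so the excision step is immediate; if one instead worked in a genuine pseudometric or non-$T_{1}$ setting, the formulation of the lemma would have to be strengthened to the cluster-point condition to remain valid.
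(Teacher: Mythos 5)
The paper offers no proof of this lemma at all --- it is quoted verbatim from Theorem 8 of Chapter 2 of \cite{kelley} --- so there is nothing internal to compare against; your argument is the standard extraction via a nested countable neighborhood base and is sound. Your closing observation about the definitional mismatch is in fact the only delicate point, and it is a real one: the appendix defines ``accumulation point'' only for \emph{subsets}, whereas Kelley's theorem (and every use of the lemma in the paper, e.g.\ turning the accumulation point supplied by compactness into the convergent subsequence $\left\{ n_{i}\right\}$) concerns \emph{cluster points of sequences}, i.e.\ the condition that every neighborhood of $s$ contain $s_{n}$ for infinitely many indices $n$. Under the literal set-theoretic definition the ``if'' direction can genuinely fail --- take $s_{n}$ alternating between $s$ and some $t\neq s$ in a $T_{1}$ space: the constant subsequence converges to $s$, yet a neighborhood of $s$ excluding $t$ meets the range of the sequence only in $s$ --- and your hedge about eventually constant sequences does not cover this case; under the cluster-point reading both directions go through exactly as you wrote them, and the excision step in the ``only if'' direction becomes unnecessary, since ``frequently in every neighborhood'' directly supplies an index exceeding any given $n_{k-1}$. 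As the spaces to which the lemma is applied are metric, hence $T_{1}$ and first countable, your proof is adequate for every use the paper makes of the lemma.
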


By Theorem\thinspace5.2 of \cite{kelley}, we have:

\begin{lemma}
\label{lem:compact-accumulate}If a topological space $X$ is copmpact, each
sequence $S$ has an accumulation point.
\end{lemma}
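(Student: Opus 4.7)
The plan is to prove this via the finite intersection characterization of compactness, applied to the closures of the tails of the sequence. Write $S = (s_n)_{n=1}^{\infty}$, and set $T_n := \{s_k : k \geq n\}$ and $F_n := \overline{T_n}$. Since the $T_n$ form a nested decreasing family of nonempty sets, so do the $F_n$. In particular, for any finite index set $\{n_1,\dots,n_r\}$, the intersection $F_{n_1} \cap \cdots \cap F_{n_r}$ contains $F_{\max_j n_j}$, which is nonempty. Hence $\{F_n\}_{n=1}^\infty$ is a collection of closed sets with the finite intersection property.

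Next I would invoke the standard equivalence of compactness with the finite intersection property for closed sets (this is precisely what Kelley's Theorem 5.2 supplies, which is what the lemma cites). Applying it to our family gives a point $x \in \bigcap_{n \geq 1} F_n$. Because $x \in F_n = \overline{T_n}$ for every $n$, every neighborhood $U$ of $x$ intersects $T_n$ for every $n$; equivalently, for each $n$ there exists $k \geq n$ with $s_k \in U$. This says the sequence is frequently in every neighborhood of $x$, which is exactly the condition that $x$ be a cluster (accumulation) point of the sequence $S$ in the sense used in the paper.

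The only subtlety to address is reconciling the two usages of ``accumulation point'' appearing in the appendix: the set-theoretic one (``every neighborhood contains a point of $A$ other than $s$'') and the one applied to sequences (characterized by existence of a convergent subsequence in first-countable spaces). For a sequence, the natural notion is the cluster-point notion captured by frequent visitation above; when the sequence takes only finitely many distinct values one must allow repetitions rather than require distinct points, and indeed the proof above yields precisely a cluster point. I would briefly note that this is the appropriate reading consistent with Lemma~\ref{lem:cluster}.

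I do not expect a serious obstacle here: the argument is a one-line application of the closed-set form of compactness, and the main care is just bookkeeping about the definition of accumulation point for sequences (as opposed to for subsets). Since the paper cites Kelley's Theorem 5.2 directly, the proof can essentially be reduced to producing the family $\{F_n\}$, verifying the finite intersection property by nesting, and reading off the conclusion from any $x$ in the total intersection.
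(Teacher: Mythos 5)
Your proof is correct: the nested closed tails $F_n=\overline{\{s_k:k\geq n\}}$ have the finite intersection property, compactness yields a point in $\bigcap_n F_n$, and such a point is frequently visited by the sequence, i.e.\ is a cluster point in the sense used in Lemma~\ref{lem:cluster}. The paper itself offers no proof at all---it simply records the lemma as a consequence of Theorem~5.2 of Kelley---so your write-up is just the standard argument underlying that citation, including the sensible clarification that ``accumulation point of a sequence'' must be read as cluster point rather than the set-theoretic notion defined earlier in the appendix.
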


\section{Perturbation of matrix functions}

A function $f$ is said to be operator monotone if and only if $f\left(
A\right)  \geq f\left(  B\right)  $ holds for any Hermitian matrices $A$, $B$
with $A\geq B$.

\begin{lemma}
\label{lem:ta}(Theorem V.1.9 of \cite{Bhatia}) $f\left(  t\right)  =t^{\alpha
}$ ($0\leq\alpha\leq1$) is an operator monotone function on $[0,\infty)$
\end{lemma}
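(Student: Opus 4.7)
The plan is to prove operator monotonicity of $f(t)=t^{\alpha}$ on $[0,\infty)$ via the classical integral (Löwner/Pick) representation of the power function, which expresses $t^{\alpha}$ as a positive linear combination of manifestly operator monotone building blocks. The cases $\alpha=0$ and $\alpha=1$ are trivial (constant and identity functions), so I would fix $\alpha\in(0,1)$ and work there.

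First I would establish, for $t\geq 0$, the identity
\[
t^{\alpha}=\frac{\sin(\alpha\pi)}{\pi}\int_{0}^{\infty}\frac{t}{t+s}\,s^{\alpha-1}\,ds,
\]
which can be checked by substituting $s=tu$ and using the standard Beta-function evaluation $\int_{0}^{\infty}u^{\alpha-1}(u+1)^{-1}du=\pi/\sin(\alpha\pi)$. The integrand at each fixed $s>0$ is $\varphi_{s}(t):=t/(t+s)=1-s/(t+s)$, so it suffices to show that each $\varphi_{s}$ is operator monotone on $[0,\infty)$ and that the integral preserves operator monotonicity.

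For the key pointwise step, take $A\geq B\geq 0$. Then $A+sI\geq B+sI>0$, so both are invertible. Conjugating by $(B+sI)^{-1/2}$ gives $(B+sI)^{-1/2}(A+sI)(B+sI)^{-1/2}\geq I$; inverting this (inversion reverses order on positive operators, since $X\geq I$ implies $X^{-1}\leq I$) and conjugating back yields $(A+sI)^{-1}\leq(B+sI)^{-1}$. Consequently $\varphi_{s}(A)=I-s(A+sI)^{-1}\geq I-s(B+sI)^{-1}=\varphi_{s}(B)$, proving $\varphi_{s}$ is operator monotone. To pass to the integral, I would apply the spectral calculus: for any Hermitian $A\geq 0$,
\[
A^{\alpha}=\frac{\sin(\alpha\pi)}{\pi}\int_{0}^{\infty}\varphi_{s}(A)\,s^{\alpha-1}\,ds,
\]
where the integral converges in operator norm (a routine check using the spectral decomposition and the scalar identity applied eigenvalue-wise, with uniform bounds coming from $\|A\|$). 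Since the measure $s^{\alpha-1}ds$ on $(0,\infty)$ is positive, the inequality $\varphi_{s}(A)\geq\varphi_{s}(B)$ integrates to $A^{\alpha}\geq B^{\alpha}$.

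The main obstacle I anticipate is nothing conceptually deep but rather the justification of the integral formula at $t=0$ and the handling of non-invertible (singular) $A$ or $B$. For $t=0$ the integrand is $0$, consistent with $0^{\alpha}=0$, and for singular positive operators one can either argue by continuity (approximate $A,B$ by $A+\varepsilon I,B+\varepsilon I$ and let $\varepsilon\downarrow 0$, using continuity of $X\mapsto X^{\alpha}$ on bounded subsets of positive operators) or work directly with the integrand, noting that $(A+sI)^{-1}$ exists for all $s>0$ regardless of whether $A$ is invertible. Either route cleanly yields $A^{\alpha}\geq B^{\alpha}$ whenever $A\geq B\geq 0$, completing the proof.
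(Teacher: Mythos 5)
The paper gives no proof of this lemma at all --- it simply cites Theorem V.1.9 of Bhatia's \emph{Matrix Analysis} --- and your argument is correct and is essentially the proof given in that cited source: the L\"owner integral representation $t^{\alpha}=\frac{\sin(\alpha\pi)}{\pi}\int_{0}^{\infty}\frac{t}{t+s}\,s^{\alpha-1}\,ds$, operator monotonicity of each $t\mapsto t/(t+s)$ via antitonicity of the inverse, and integration against the positive measure. All the steps you flag as needing care (convergence of the integral for $0<\alpha<1$, the trivial endpoint cases, and singular $A,B$ handled by noting $(A+sI)^{-1}$ exists for all $s>0$) are handled correctly, so the proposal stands as a complete, self-contained proof.
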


\begin{lemma}
\label{lem:f-f<f}(Theorem X.1.1 of \cite{Bhatia}) Let $f$ be an operator
monotone function on $[0,\infty)$ such that $f\left(  0\right)  =0$. Then, for
all positive operators $A$, $B$,
\[
\left\Vert f\left(  A\right)  -f\left(  B\right)  \right\Vert \leq f\left(
\left\Vert A-B\right\Vert \right)  .
\]

\end{lemma}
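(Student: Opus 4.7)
The plan is to reduce the operator inequality to a scalar inequality via commutativity, and then to exploit the fact that operator monotone functions on $[0,\infty)$ vanishing at $0$ are scalar concave.

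Set $t := \|A-B\|$. Since $-tI \le A-B \le tI$, we have $A \le B+tI$ and $B \le A+tI$. Applying $f$ and using operator monotonicity gives $f(A) \le f(B+tI)$ and $f(B) \le f(A+tI)$, so the Hermitian operator $H := f(A) - f(B)$ satisfies
\[
-\bigl[f(A+tI) - f(A)\bigr] \;\le\; H \;\le\; \bigl[f(B+tI) - f(B)\bigr],
\]
with both bracketed operators positive (again by operator monotonicity, applied to $A \le A+tI$ and $B \le B+tI$). For any Hermitian $H$ sandwiched as $-R \le H \le S$ with $R,S \ge 0$, the spectrum of $H$ lies in $[-\|R\|, \|S\|]$, so $\|H\| \le \max(\|R\|, \|S\|)$. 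It therefore suffices to prove, for every positive operator $X$ and every $t > 0$,
\[
\|f(X+tI) - f(X)\| \le f(t).
\]

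For this, observe that $X$ and $X+tI$ commute, hence are simultaneously diagonalizable. In a joint eigenbasis $X = \sum_k \lambda_k P_k$ (with $\lambda_k \ge 0$) one has $f(X+tI) - f(X) = \sum_k \bigl[f(\lambda_k + t) - f(\lambda_k)\bigr] P_k$, and its operator norm is $\max_k [f(\lambda_k + t) - f(\lambda_k)]$. The desired bound thus reduces to the scalar inequality $f(\lambda + t) - f(\lambda) \le f(t)$ for $\lambda \ge 0$.

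This scalar inequality follows from concavity of $f$: by Loewner's theorem, or directly from the integral representation $f(s) = \alpha s + \int_0^\infty \frac{\lambda s}{\lambda + s}\, d\nu(\lambda)$ of operator monotone functions on $[0,\infty)$ with $f(0) = 0$ (the integrand $s \mapsto \lambda s/(\lambda + s)$ being concave in $s$ for each fixed $\lambda > 0$), $f$ is concave on $[0,\infty)$. Combined with $f(0) = 0$, concavity makes the divided difference $[f(\lambda+t) - f(\lambda)]/t$ a non-increasing function of $\lambda \ge 0$, so it is bounded above by its value at $\lambda = 0$, namely $f(t)/t$. Multiplying by $t$ gives the claim, and chaining the three steps yields $\|f(A) - f(B)\| \le f(\|A-B\|)$. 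The only nontrivial imported ingredient is the scalar concavity of $f$; once that is granted (via either Loewner or the integral representation), the rest is short and elementary, using only operator monotonicity and the simultaneous diagonalizability of $X$ and $X+tI$.
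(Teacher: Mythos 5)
The paper offers no proof of this lemma at all---it is imported verbatim as Theorem X.1.1 of \cite{Bhatia}---so there is nothing internal to compare against. Your argument is correct and is essentially the standard proof from Bhatia's book: sandwich $f(A)-f(B)$ using $A\le B+tI$, $B\le A+tI$ with $t=\Vert A-B\Vert$, reduce to the commuting pair $X$, $X+tI$, and finish with the scalar subadditivity $f(\lambda+t)-f(\lambda)\le f(t)$, which follows from concavity of operator monotone functions on $[0,\infty)$ together with $f(0)=0$; each step, including the bound $\Vert H\Vert\le\max(\Vert R\Vert,\Vert S\Vert)$ for Hermitian $H$ with $-R\le H\le S$, checks out.
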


\section{}

\end{document}